\def\path{/usr/local/teTeX/share/texmf.local/tex/latex/base}
\newtheorem{theorem}{Theorem}[section]
\newtheorem{lemma}[theorem]{Lemma}
\newtheorem{algorithm}{Algorithm}[section]
\newtheorem{remark}{Remark}
\def\rr{{\mathbb R}}
\def\Gap{{\mathbf{Gap}}}
\def\h{{\mathbf h}}
\def\min{\mbox{min}}
\def\ii{{(i)}}
\def\xx{\mathbb{R}^d}
\def\E{\mathbb{E}}
\def\Unif{\emph{Unif}}
\begin{document}

\begin{frontmatter}

% "Title of the paper"
\title{Small World MCMC with Tempering: \protect\\ Ergodicity and Spectral Gap}
\runtitle{STEEP sampler: Ergodicity and Spectral Gap }

%\begin{aug}
%\author{\fnms{First} \snm{Author}\thanksref{t1,t2,m1}\ead[label=e1]{first@somewhere.com}},
%\author{\fnms{Second} \snm{Author}\thanksref{t3,m1,m2}\ead[label=e2]{second@somewhere.com}}
%\and
%\author{\fnms{Third} \snm{Author}\thanksref{t1,m2}
%\ead[label=e3]{third@somewhere.com}
%\ead[label=u1,url]{http://www.foo.com}}
%
%\thankstext{t1}{Some comment}
%\thankstext{t2}{First supporter of the project}
%\thankstext{t3}{Second supporter of the project}
%\runauthor{F. Author et al.}
%
%\affiliation{Some University\thanksmark{m1} and Another University\thanksmark{m2}}
%
%\address{Address of the First and Second authors\\
%Usually a few lines long\\
%\printead{e1}\\
%\phantom{E-mail:\ }\printead*{e2}}
%
%\address{Address of the Third author\\
%Usually a few lines long\\
%Usually a few lines long\\
%\printead{e3}\\
%\printead{u1}}
%\end{aug}

\begin{aug}
\author{\fnms{Yongtao} \snm{Guan}\corref{}\thanksref{t1}\ead[label=e1]{yongtaog@bcm.edu}}
\and
\author{\fnms{Matthew} \snm{Stephens}\ead[label=e2]{mstephens@uchicago.edu}}
\runauthor{Guan and Stephens}

\address{Baylor College of Medicine \\Children's Nutrition Research Center   \\1100 Bates Room 2070 \\ Houston, TX 77030}
\address{University of Chicago \\Eckhart Hall Room 126\\5734 S. University Avenue\\Chicago, IL 60637}
\address{\printead{e1}}
\address{\printead{e2}}
\affiliation{Baylor College of Medicine and University of Chicago}

\thankstext{t1}{Thanks to Winfried Barta for helpful discussions regarding ergodicity proof.}
%\date{}

\end{aug}

\begin{abstract}
When sampling a multi-modal distribution $\pi(x)$,  $x\in \rr^d$, a Markov chain with local proposals is often slowly mixing; while a Small-World sampler \citep{guankrone} --  a Markov chain that uses a mixture of local and long-range proposals -- is fast mixing.  However, a Small-World sampler suffers from the curse of dimensionality because its spectral gap depends on the volume of each mode. We present a new sampler that combines tempering, Small-World sampling, and producing long-range proposals from samples in companion chains (e.g. Equi-Energy sampler).  In its simplest form the sampler employs two Small-World chains:  an exploring chain and a sampling chain.  The exploring chain samples $\pi_t(x) \propto \pi(x)^{1/t}$, $t\in [1,\infty)$, and builds up an empirical distribution. Using this empirical distribution as its long-range proposal, the sampling chain is designed to have a stationary distribution $\pi(x)$. We prove ergodicity of the algorithm and study its convergence rate. We show that the spectral gap of the exploring chain is enlarged by a factor of $t^{d}$ and that of the sampling chain is shrunk by a factor of $t^{-d}$. Importantly, the spectral gap of the exploring chain depends on the ``size" of $\pi_t(x)$ while that of sampling chain does not.  Overall, the sampler enlarges a severe bottleneck at the cost of shrinking a mild one, hence achieves faster mixing. The penalty on the spectral gap of the sampling chain can be significantly alleviated when extending the algorithm to multiple chains whose temperatures $\{t_k\}$ follow a geometric progression. If we allow $t_k \rightarrow 0$, the sampler becomes a global optimizer.
 
\end{abstract}

\begin{keyword}[class=AMS]
\kwd[Primary ]{65C05}
\kwd{}
\kwd[; secondary ]{65C40}
\end{keyword}

\begin{keyword}
\kwd{Markov Chain, Monte Carlo, Small World Sampler, Tempering,
Non-homogeneous Markov Chain, Spectral Gap,
 State Decomposition, Cheeger's Inequality, Isoperimetric
Inequality, Equi-Energy Sampler, Phylogenetic Tree
}
\end{keyword}

\end{frontmatter}

\section{Introduction}  
%Developing efficient sampling algorithms for high dimensional multi-modal distributions remains a challenging problem.
Developing an algorithm to improve sampling efficiency for a high dimensional multi-modal distribution has been an active research area \citep{Geyer91,tempering,neal,ee,madras.zheng,guankrone,ee-proof,Woodard07,brockwell,delmoral}.
In this paper we introduce a new sampling algorithm and study its ergodicity and convergence properties. The new algorithm
combines several existing ideas: tempering \citep{Geyer91,tempering}, propagating information between chains via empirical distributions \citep{ee}, and Small-World sampling \citep{guan,guankrone}.  
A Small-World sampling combines local and long-range proposals in a Metropolis-Hastings algorithm. \citet{guankrone} showed that a Small-World chain converges faster than a Metropolis-Hastings chain with local proposals when sampling a multi-modal density.

The new algorithm is simple and easy to implement. 
In its simplest form the algorithm has two Small-World chains: the first chain samples a fattened (or flattened) distribution (via tempering) with its long-range proposals generated by a typical heavy-tailed distribution; the second chain samples the target distribution with its long-range proposals randomly drawn from samples of the first chain. Intuitively, the first chain identifies and remembers (via its empirical distribution) modes of the distribution -- it does  so more effectively than an unheated chain because it accepts more long-range proposals that tend to move between different modes.  This knowledge of whereabouts of different modes is used by the second chain through its long-range proposals. 
%For challenging problems the use of multiple chains, producing a cascade of information flow from one to the next, further improves convergence.  

The new algorithm bears similarity with the Equi-Energy sampler. Indeed, our original goal was to prove the ergodicity of the Equi-Energy sampler -- the ergodicity proof in the Equi-Energy paper was incomplete and the amended proof \citet{atchade+liu} has difficulties.  Identified these problems, \citet{ee-proof} proved ergodicity of the Equi-Energy sampler using the Poisson's Equation. By studying properties of perturbed kernels, \citet{fort} proved ergodicity of a class of adaptive MCMC algorithms, including the Equi-Energy sampler.  Our proof  uses mixingale theory,  built on the work of \cite{atchade+liu} and \cite{atchade+rosenthal}, which allows us to study the asymptotic convergence rate. The convergence results identified that local proposal in the feeding chain of Equi-Energy sampler slows down the convergence in a multi-modal distribution. Thus we propose using a small-world sampler as the feeding chain.  

The paper contributes both practically and theoretically in sampling high dimensional multi-modal distributions.  First, we provide a simple and easy-to-implement algorithm that is effective for challenging problems.   
Second, we prove ergodicity of the algorithm and analyze its convergence rate. The result on convergence rate provides important insights into when, why and how the algorithm will improve convergence in practice.  

\medskip

We first formally set up the problem. Let $(\xx, \mathcal{B}(\xx))$ be the state space equipped with its $\sigma$-algebra. 
 A Metropolis-Hastings algorithm \citep{metropolis,Hast70} aims to sample from a probability measure $\pi$ that admits a density with respect to Lebesgue measure that is only known up to a normalizing constant. We use $\pi$ to denote a measure and $\pi(x)$ to denote a density and it should be clear from the context.  
The transition kernel of a Metropolis-Hastings chain is
\begin{equation}\label{eqn:P}
P(x,dy)=a(x,y)\:k(x,dy) +r(x)\:\delta_x(dy),
\end{equation}
where $k(x,y): \xx \times \xx \rightarrow [0, \infty)$ is a \emph{proposal distribution}, 
%\begin{equation}
$a(x,y)=\min \left(1, \frac{\pi(y)}{\pi(x)} \frac{k(y,x)}{k(x,y)}\right)$
%\end{equation}
is the \textit{acceptance probability} of a proposed move $y$, 
the $\delta_x$ is the point mass at $x$, and
%\begin{equation}
$r(x)=\int_{\xx}{(1-a(x,y))k(x,dy)}$
%\end{equation}
is the probability that $y$ being rejected. 
Obviously $k(x,y)$ determines $P(x,dy)$.
%Let $\{X_n\}$ be a homogenous Markov chain, on which a transition kernel $P(x,dy)$ operates such that $P(x, A)=Pr(X_{n+1} \in A| X_n=x)$. 

Let $f : \xx \rightarrow \rr$ denotes a measurable function. 
For a signed measure $\mu$ and a positive function $V$, define the $V$-norm of $\mu$ by
%\begin{equation}
$||\mu||_V := \sup_{f \le V}{|\mu (f)|} $
%\end{equation}
where $\mu(f) = \int_{\xx}{f(x) \mu(dx)}$.
%and
A transition kernel $P$ acts on $f$ such that %\begin{equation}\label{eqn:kernel}
$Pf(x)=\int_{\xx}{f(y)P(x,dy)}$.
Given two transition kernels $P, Q$, the product $PQ$ is also a transition kernel $(PQ)(x,A) = \int_{\xx}{P(x, dy) Q(y, A)}$. This allows us to define a product of kernels of countable many through induction.

\subsection{Minorization and drift conditions}
We assume the following assumptions hold:
\begin{enumerate}
\item[A1] A probability measure $\psi$ exists on $\mathcal{B}$ such that the Markov chain is $\psi$-irreducible and aperiodic \cite[c.f.][]{meynbook}.
\item[A2] \emph{Minorization Condition}: there exist a set $C \in \mathcal{B}$ and $\epsilon > 0$ such that, for the same $\psi$ in A1, we have $\psi(C) > 0$ and, for all $A \in \mathcal{B}$, $x \in C$,
\begin{equation} \label{a:min}
P(x, A) \ge \epsilon \;\psi(A).
\end{equation}
\item[A3] \emph{Drift Condition}: there exist a measurable function $V: \xx \rightarrow [1, \infty)$ such that $\pi(V) < \infty$ and constants $\lambda \in [0, 1)$ and $b \in (0, \infty)$ and same set $C$ in A2 satisfying
\begin{equation} \label{a:drift}
P V(x) \le \lambda V(x) + b {\bf1}_C(x).
\end{equation}
\end{enumerate}
The minorization condition ensures the \emph{flow} from the set $C$ to outside is lower-bounded, an idea intimately connected to conductance (see Section \ref{sec:gap}).   The drift condition guarantees that the Markov chain evolves towards $C$.  
The drift condition is necessary to define $V$-geometrically ergodic.
The above minorization and drift conditions can be checked for many practical problems. For example, if $P$ is a Random Walk
Metropolis kernel, both \eqref{a:min} and \eqref{a:drift}  are known to hold under some regularity conditions on the target densities \cite[see][]{atchade10}. 

\subsection{Spectral gap}
A homogeous Markov chain that satisfies (A1-A3) is geometric ergodic \citep[c.f.][]{meynbook,rrsurvey}.  Let $L^2(\pi)$ denote the space of measurable functions on $\xx$ with $\int_{\xx}{f(x)^2\pi(dx)}<\infty,$
 with inner product $\langle f,g\rangle =\int_{\xx}f(x)g(x)\pi(dx)$, and norm $\|f\|=\langle f,f\rangle^{1/2}$.
The operator $P$ being reversible with respect to $\pi$ is equivalent to $P$ being self-adjoint.
%$\langle Pf,g\rangle =\langle f,Pg\rangle$, for $f,g \in L^2(\pi).$
It is well known that the spectrum of $P$ is a subset
of $[-1,1].$ (Self-adjoint implies its spectrum is real,
and being a Markov transition kernel determines the
range.)
A chain is said to be $L^2(\pi)$-geometrically ergodic if there exist
a constant $\rho \in (0,1)$ and a positive $M < \infty$, and $V(x)$ defined in (A3) such that
\begin{equation} \label{eqn:geometric}
\|P^n(x, \cdot)-\pi(\cdot)\|_V\le M \rho^n V(x).  
\end{equation}
%for any non-negative integer $n$.
%(i.e., $\mu_0 \ll \pi$ with $\int|\frac{d\mu}{d\pi}|^2 d\pi < \infty$).
%Roberts and Tweedie
%\cite{roberts} have shown that convergence in $L^2$ implies
%convergence in ``total variation" norm
%$$\|\mu_1-\mu_2\|_\text{tv}=\sup_{A\subset \xx}|\mu_1(A)-\mu_2(A)|
%  =\frac{1}{2} \int_\xx |f_1(x) -f_2(x)|\:dx , $$
%where $f_i(x) = d\mu_i /dx$.
%Let $L_0^2(\pi)$ denote the orthogonal complement of the constant function {$\bf 1$}.
Define $L_0^2(\pi)=\{f\in L^2(\pi): \langle f,{\bf1} \rangle
=0\}.$
%Clearly, as a subspace of
%$L^2(\pi)$, $L_0^2(\pi)$ is also a Hilbert space.
Denote by $P_0$ the restriction of $P$ to $L_0^2(\pi)$. 
% (that is, $P$ only operates on functions in $L_0^2(\pi)$).
%Chan and Geyer
%\cite{changeyer} showed that, for a geometrically ergodic chain,
%$P_0$ has no point spectrum (i.e., eigenvalues) of value $\pm1$.
It has been shown \citep{rr,roberts} that for
reversible Markov chains, geometric ergodicity is equivalent to
the condition
\begin{equation}\label{eqn:P0}
|||P_0||| \equiv \sup_{f\in L_0^2(\pi), \|f\|\le 1}\|P_0 f \| < 1.
\end{equation}
The \textit{spectral gap} of the chain $P$ is
defined by
\begin{equation}
\Gap(P)=1-|||P_0|||.
\end{equation}
%An equivalent alternative definition of spectral gap can be found in \citet{jerrum} and \citet{madras}. 
The spectral gap determines the convergence speed of a MCMC algorithm. Very roughly, a chain is close to equilibrium after a few multiples of $1/\Gap(P)$ iterations \citep{madras}.

%%%%%%%%%%%%%%%%%%%%%%%%%%%%%%%%%%%%%%%%%%%%%%%%%
\subsection{Piecewise log-concave distributions} \label{sec:logcon}
%Our goal is to study ergodicity and spectral gap of a new sampling algorithm when $\pi(x)$ is a multi-modal distribution. 
We assume the target density $\pi(x)$ is a mixture of log-concave densities.
A function $f: \xx \rightarrow (0, \infty)$ is \textit{log-concave} if for any $s \in [0, 1]$,
\begin{equation}
f(s\: x + (1-s) y) \ge f(x)^s f(y)^{1-s}.
\end{equation}
Let $|\cdot|$ be a metric on $\xx$, $f$ is \textit{$\alpha$-smooth} if  $|\log{f(x)}-\log{f(y)}| < \alpha \:|x-y|$ for all $x, y \in \xx$.  
%We restrict our attention to ``smooth" log-concave functions. 
% \citep[but see discussion for technical reasons for such a restriction][]{guankrone}.
Let $\{A_1, \dots, A_m\}$ be a partition of $\xx$.  
%such that $\xx=A_1\cup \cdots \cup A_m$.  
Let $\pi_{(i)}:A_i \rightarrow (0, \infty)$ be an $\alpha$-smooth log-concave density with \emph{barycenter} $\beta_i =\int_{A_i}{\pi_{(i)}(dx)}$ and the \emph{first moment} $M_i = \int_{A_i} {|x-\beta_i| \pi_{(i)}(dx)}$.
Let $\beta_{ij}=|\beta_i- \beta_j|, \; i\neq j$ and $\beta_\pi = \max(\beta_{ij})$.
Let $w_i > 0$, the distribution of interest is %a mixture of these log-concave densities:
\begin{equation}\label{eqn:pi}
\pi(x) \propto \sum_{i=1}^m \pi_{(i)}(x) 1_{A_i}(x) \;w_i.
\end{equation}
Define average proposal distance of $k(x, y)$ as $D= \int_{\xx\times\xx}{|x-y| k(x,y) dx dy}$.  Then $k(x,y)$ is \emph{local} if $D < \min\{M_i\}$ and \emph{long-range} if $D >  \beta_{\pi}$.  We call a chain \emph{local chain} if it only uses local proposals.
By  \emph{size} (or \emph{complexity}) of $\pi(x)$ we mean the quantities associated with $\pi(x)$ that will affect the spectral gap. Namely, the measure of the steepness of each mode $\alpha$ and the measure of the distances between modes $\beta_{ij}$.  We treat dimensionality of $\xx$ and number of modes $m$ as fixed quantities.

%%%%%%%%%%%%%%%%%%%%%%%%%%%%%%%%%%%%%%%%%%%%%%%%%%
\subsection{State decomposition theorem}\label{sec:SDT}
%Recall $\{A_1, \dots, A_m\}$ is a partition of $\xx$. 
We describe the ``pieces" of a Metropolis--Hastings
chain $P$ by defining, for each $i=1,\dots, m$, a new Markov chain
on $A_i$ that rejects any transitions of $P$ out of $A_i$. The
transition kernel $P_{A_i}$ of the new chain is given by
\begin{equation}
P_{A_i}(x, B)= P(x,B)+1_B(x)P(x, A_i^c) \hspace{.2in} \mbox{for }
x\in A_i, B\subset A_i. \label{KernelPAi}
\end{equation}
The movement of the original chain among the ``pieces" can be
modeled by a ``component" Markov chain with state space $\{1,
\ldots, m\}$ and transition probabilities:
\begin{equation} \label{eqn:ph}
P_c(i,j)=\frac{1}{2\; \pi(A_i)}\int_{A_i}P(x, A_j)\pi(dx),
\hspace{.2in}\text{for } i\neq j,
\end{equation}
and $P_c(i, i)= 1-\sum_{j\neq i}P_c(i,j)$.  

\begin{theorem} [State Decomposition Theorem]\label{thm:SDT}
In the preceding framework, as given by Equations
(\ref{KernelPAi}) and (\ref{eqn:ph}), we have
\begin{equation}
\Gap (P)\geq \frac{1}{2}\Gap (P_c)({\bf min}_{i=1,\dots,m}\Gap
(P_{A_i})). \label{SDT}
\end{equation}
\end{theorem}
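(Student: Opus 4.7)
My plan is to work through the variational (Dirichlet form) characterization of the spectral gap. Recall that for a reversible kernel $Q$ with stationary distribution $\mu$, one has $\Gap(Q) = \inf\{\mathcal{E}_Q(f,f)/\mathrm{Var}_\mu(f)\}$ over non-constant $f$, where the Dirichlet form is $\mathcal{E}_Q(f,f) = \tfrac{1}{2}\int\!\!\int (f(x)-f(y))^2 Q(x,dy)\mu(dx)$; this is equivalent to \eqref{eqn:P0} by standard manipulations. The overall goal is to show, for an arbitrary $f \in L^2(\pi)$, the Poincar\'e-type inequality
\[
\mathrm{Var}_\pi(f) \le \frac{2}{\Gap(P_c)\min_i\Gap(P_{A_i})}\,\mathcal{E}_P(f,f).
\]

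First I will decompose the global variance. Setting $\pi_i := \pi(\cdot\,|\,A_i)$, $\bar f(i) := \int f\,d\pi_i$ and $\pi_c(i) := \pi(A_i)$, a short check using reversibility of $P$ against $\pi$ shows $\pi_c$ is reversible for $P_c$, and the law of total variance gives
\[
\mathrm{Var}_\pi(f) = \sum_i \pi(A_i)\,\mathrm{Var}_{\pi_i}(f) + \mathrm{Var}_{\pi_c}(\bar f).
\]
Applying the Dirichlet form characterization piece by piece,
\[
\mathrm{Var}_{\pi_i}(f) \le \Gap(P_{A_i})^{-1}\mathcal{E}_{P_{A_i}}(f,f),\qquad \mathrm{Var}_{\pi_c}(\bar f) \le \Gap(P_c)^{-1}\mathcal{E}_{P_c}(\bar f,\bar f).
\]

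The central step, and the one I expect to be the main technical obstacle, is bounding both restricted Dirichlet forms by $\mathcal{E}_P(f,f)$. The within-piece part is essentially free: $P_{A_i}$ differs from $P$ only by replacing out-of-$A_i$ transitions with self-loops, which contribute $0$ to the Dirichlet form, so
\[
\sum_i \pi(A_i)\mathcal{E}_{P_{A_i}}(f,f) = \tfrac{1}{2}\sum_i \int_{A_i}\!\!\int_{A_i}(f(x)-f(y))^2 P(x,dy)\pi(dx).
\]
The between-piece part is where the real work sits: for each pair $i\neq j$ I plan to apply Jensen's inequality to $(\bar f(i)-\bar f(j))^2$ against the measure $P(x,dy)\pi(dx)$ restricted and normalized on $A_i\times A_j$, and then exploit the factor $\tfrac{1}{2}$ baked into \eqref{eqn:ph} to account for double-counting of $(i,j)$ versus $(j,i)$. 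This should deliver
\[
2\,\mathcal{E}_{P_c}(\bar f,\bar f) \le \tfrac{1}{2}\sum_{i\neq j}\int_{A_i}\!\!\int_{A_j}(f(x)-f(y))^2 P(x,dy)\pi(dx),
\]
so that adding the within-piece and cross-piece pieces recovers every ordered pair and yields
\[
\sum_i \pi(A_i)\mathcal{E}_{P_{A_i}}(f,f) + 2\,\mathcal{E}_{P_c}(\bar f,\bar f) \le \mathcal{E}_P(f,f).
\]

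To conclude, write $G := \min_i \Gap(P_{A_i})$. Rewriting $\Gap(P_c)^{-1}\mathcal{E}_{P_c}(\bar f,\bar f) = (2\Gap(P_c))^{-1}\cdot 2\mathcal{E}_{P_c}(\bar f,\bar f)$ and bounding both coefficients by their maximum,
\[
\mathrm{Var}_\pi(f) \le \max\!\left(G^{-1},\,(2\Gap(P_c))^{-1}\right)\mathcal{E}_P(f,f),
\]
so $\Gap(P) \ge \min(G,\,2\Gap(P_c)) \ge \tfrac{1}{2}\Gap(P_c)\,G$, the last inequality using $G,\Gap(P_c)\in[0,1]$. The delicate point is the Jensen step for the component chain: choosing the right reference measure on $A_i \times A_j$ so that the averages $\bar f(i),\bar f(j)$ appear with the correct weights and the factor of $\tfrac{1}{2}$ from \eqref{eqn:ph} plays its role is where I expect the real effort to go.
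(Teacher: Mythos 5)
First, note that the paper does not actually prove Theorem \ref{thm:SDT}; it cites \citet{guankrone}, who generalize the decomposition theorem of Madras and Randall. Your framework --- the Dirichlet-form characterization of the gap, the law of total variance $\mathrm{Var}_\pi(f)=\sum_i\pi(A_i)\mathrm{Var}_{\pi_i}(f)+\mathrm{Var}_{\pi_c}(\bar f)$, and the identity $\sum_i\pi(A_i)\mathcal{E}_{P_{A_i}}(f,f)=\tfrac12\sum_i\int_{A_i}\int_{A_i}(f(x)-f(y))^2P(x,dy)\pi(dx)$ --- is exactly the framework of those proofs. But the step you yourself flag as the crux is not merely delicate; it is false. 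Your Jensen step requires, for each $i\ne j$,
\[
\Bigl(\textstyle\int_{A_i}P(x,A_j)\pi(dx)\Bigr)\bigl(\bar f(i)-\bar f(j)\bigr)^2\;\le\;\int_{A_i}\int_{A_j}\bigl(f(x)-f(y)\bigr)^2P(x,dy)\,\pi(dx),
\]
and no choice of reference measure rescues this, because the marginals of $P(x,dy)\pi(dx)$ restricted to $A_i\times A_j$ are the \emph{exit-weighted} laws proportional to $P(x,A_j)\pi(dx)$, not $\pi_i$ and $\pi_j$. Concretely, take $\pi$ uniform on $[0,2]$, $A_1=[0,1)$, $A_2=[1,2]$, $P$ the Metropolis chain with uniform proposal on $[x-\delta,x+\delta]$, and $f(x)=x$. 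Then $(\bar f(1)-\bar f(2))^2=1$ and $\int_{A_1}P(x,A_2)\pi(dx)=\delta/8$, while the right-hand side is $O(\delta^3)$ because cross-piece transitions occur only within distance $\delta$ of the interface, where $|f(x)-f(y)|\le\delta$. The same example kills the aggregated inequality $\sum_i\pi(A_i)\mathcal{E}_{P_{A_i}}(f,f)+2\mathcal{E}_{P_c}(\bar f,\bar f)\le\mathcal{E}_P(f,f)$: its left side is of order $\delta$ and its right side of order $\delta^2$.

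A second warning sign is that, had your inequality held, you would have concluded $\Gap(P)\ge\min\bigl(\min_i\Gap(P_{A_i}),\,2\Gap(P_c)\bigr)$, which is strictly stronger than the multiplicative bound \eqref{SDT} (it gives $\epsilon$ rather than $\epsilon^2/2$ when both gaps equal $\epsilon$); decomposition bounds of that additive-free form do not hold in general. The reason the true bound is a \emph{product} is precisely that the component Dirichlet form cannot be controlled by the cross-piece part of $\mathcal{E}_P$ alone. In the Madras--Randall/Guan--Krone argument one writes $\bar f(i)-\bar f(j)=(\bar f(i)-f(x))+(f(x)-f(y))+(f(y)-\bar f(j))$, expands the square against the normalized transition measure on $A_i\times A_j$, and absorbs the two flanking terms into the within-piece variances via $\int_{A_i}(f(x)-\bar f(i))^2P(x,A_i^c)\pi(dx)\le\pi(A_i)\mathrm{Var}_{\pi_i}(f)$; those terms are then paid for a second time through $\min_i\Gap(P_{A_i})$, which is where the product structure (and the role of the factor $\tfrac12$ in \eqref{eqn:ph}) comes from. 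That re-absorption is the missing idea; the rest of your outline is sound.
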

\citet{guankrone} proved this state decomposition theorem, generalized it from its original version \citep{madras}.   
The theorem says that we can bound the spectral gap by taking into
account of the mixing speed within each mode and that of
among different modes.
In Section \ref{sec:gap}, we will focus on $\Gap(P_c)$ because it causes slowly mixing. 
 
 %%%%%%%%%%%%%%%%%%%%%%%%%%%%%%%%%%%%%%%%%%%%
\subsection{Tempering and Equi-Energy sampler}
Since \citet{Geyer91} and \citet{tempering}, \emph{tempering} has become a popular technique. Here we use two chains to illustrate tempering, although it usually employs multiple chains \citep{Geyer91}.
\begin{algorithm}
Let $s \in (0,1)$. Let $x_c$ and $x_h$ be current states of the cold and hot chains that sample $\pi(x)$ and $\pi_t(x) \propto \pi(x)^{1/t}$ (for some $t>1$) respectively. Repeat the following steps.
\begin{enumerate}
\item Simulate $u\sim \Unif\:[0,1)$.
% \begin{enumerate}
\item If $u < s$, update $x_c$ and $x_h$ independently using the Metropolis-Hastings algorithm.
\item If $u \ge s$, compute
$a=\frac{\pi_t(x_h)}{\pi_t(x_c)} \frac{\pi(x_c)}{\pi(x_h)}$
and swap $x_c$ and $x_h$ with probability $\min(1,a).$
%\item Goto (1).
%\end{enumerate}
\end{enumerate}
\end{algorithm}
%\end{algorithm}
The hot chain samples a flattened distribution; the flattening makes it easier for a local chain to move around to discover new modes. This easiness in the hot chain transfers to the cold chain through successful ``swap".  However, tempering has at least two limitations. First, because the process is ``memory-less", modes need to be repeatedly rediscovered. Second, the cold chain interferes with the hot chain because of the ``swap", which may slow down mixing. The Equi-Energy sampler \citep{ee} resolves both limitations  using empirical distributions.     
The Equi-Energy sampler runs multiple chains of different temperatures and samples of each chain are recorded and classified into different energy rings according to their energy level (log density). In addition to local proposals, the Equi-Energy sampler has ``equi-energy jumps", when a lower temperature chain proposes a new move by randomly drawing a sample from an energy ring in a higher temperature chain. Thus, the relative easiness of moving between different modes in hot chains propagates down to the cold chain. Note a high-temperature chain affects the proposal of a lower-temperature chain; while a lower-temperature chain does not affect a high-temperature chain.

\medskip
The rest of the paper is organized as following.  In the next section, we describe the algorithms and main results and compare the algorithm with the Equi-Energy sampler.  
In Section \ref{sec:ergodic} we prove ergodic theorems  and discuss the convergence rate.  In Section \ref{sec:gap} we prove theorems regarding spectral gaps of the algorithm.  In Section \ref{sec:app} we discuss practical issues and applications.  A short discussion will conclude the paper.

\section{Algorithms and Main Results}
The new algorithm combines a Small-World sampler with tempering and we call it ``Small-world Tempering with Empirical Ensemble Propagation", or STEEP. It has a backronym ``Small-world Tempering with Equi-Energy Program" to credit the Equi-Energy sampler.   
%\subsection{The Small World Sampler}
We begin with the Small-World sampler. 
\begin{algorithm}[Small World Sampler] \label{alg:sw}
Let $s \in (0,1)$ and the current state is $X_n=x$.
Let $l(x,y)$ and $h(x,y)$ be local and long-range proposals respectively.
\begin{enumerate}
\item With probability $(1-s)$ simulate $y$ from density $l(x,y)$ and compute $a =\frac{\pi(y)}{\pi(x)}\frac{l(y, x)}{l(x,y)}$.
\item Otherwise, simulate $y$ from density $h(x,y)$ and compute $a =\frac{\pi(y)}{\pi(x)}\frac{h(y, x)}{h(x,y)}$; \label{long-range-step}
\item Set $X_{n+1} \leftarrow y$ with probability $\min(1,a)$, otherwise set $X_{n+1} \leftarrow x$. 
\item Set $n\leftarrow n+1$, goto step 1.  
\end{enumerate}
\end{algorithm}
When sampling a multi-modal density, both simulation studies \citep{guan} and theoretical work \citep{guankrone} have shown that a Small-World sampler converges faster than a local chain.  
Intuitively, the local proposals allows the chain to better explore a mode, while the long-range proposals allows it to jump between modes more easily.
However, if modes are steep and far apart, a Small-World sampler will fail because that successful transitions between modes (initiated by long-range proposals) depend on their volumes \citep{guankrone} -- to increase volumes of modes, tempering does the trick. 

\subsection{The STEEP Algorithm}
%\vspace{.2in}
 The simplest form of STEEP employs two chains that run simultaneously.  Both are Small-World chains because their proposal distributions are mixtures of local and long-range. The \emph{exploring chain}  samples a fattened target $\pi_t(x) \propto \pi(x)^{1/t}$ for some $t>1$, where the long-range proposals are drawn from a typical heavy-tailed distribution (e.~g., Cauchy).  Samples collected in the exploring chain induce an empirical distribution $\xi$.  The \emph{sampling chain} samples $\pi(x)$, using $\xi$ to simulate its long-range proposals. Because the similarity between $\pi_t$ and $\pi$, in particular,  their modes are in the same places, the empirical distribution provides natural and ``intelligent"  long-range proposals for the sampling chain. 
In detail:
\begin{algorithm} [STEEP with Two Chains]\label{alg:two}
Let $\pi_t (x) \propto \pi(x) ^{1/t}$ for some $t>1$. $\{Y_n\}$ and $\{X_n\}$ sample $\pi_t$ and $\pi$ respectively, and $Y_n=y_n$ and $X_n = x_n$.  
\begin{enumerate}
\item Simulate $Y_{n+1}|Y_n=y_n$ using Algorithm \ref{alg:sw} to obtain $y_{n+1}$.
Update empirical measure $\xi_{n+1} = n/(n+1) \xi_n + 1/(n+1)\delta{(y_{n+1})}$. 
\item Simulate $X_{n+1}|X_n=x_n$ using Algorithm \ref{alg:sw}.  
Modify step (\ref{long-range-step}) so that $y \sim \xi_{n+1}$ and  compute
%\begin{equation} \label{accept}
$a =\frac{\pi(y)}{\pi(x_n)}\frac{\pi_t(x_n)}{\pi_t(y)}.$
%\end{equation}
Set $X_{n+1} \leftarrow y$ with probability $\min(1,a)$, otherwise set $X_{n+1} \leftarrow x_n$. 
\item Set $n\leftarrow n+1$, goto step 1. 
\end{enumerate}
\end{algorithm}
The algorithm \ref{alg:two} is similar to the algorithm in section $3.3$ of \cite{andrieu+etal}. The difference here is that we emphasize the long-range proposals. 

\begin{remark}[Ergodicity]
The exploring chain is a homogenous Markov chain with stationary distribution $\pi_t$. Conditional on the exploring chain, the sampling chain is a non-homogeneous Markov chain, whose transition kernel evolves over time because it depends on $\xi_n$.
 Since $\xi_n \rightarrow\pi_t$, the sampling chain converges to a Small-World chain with long-range proposal  $\pi_t$ and stationary distribution $\pi$. As a result, if we run Algorithm \ref{alg:two} long enough the sampling chain should generate samples approximately from $\pi$. This intuition is formalized in the ergodic theorem in Section \ref{sec:ergodic}.
\end{remark} 

 %%%%%%%%%%%%%%%%%%%%%%%%%%%%%%%%%%%%%%%%%%%%%%%
Since the exploring chain samples a fattened distribution $\pi_t$, each mode is larger and hence more easily found by long-range proposals. This implies a better mixing compared to directly sampling $\pi$. We quantify the intuition in the following theorem.

\begin{theorem} \label{thm:explore}
Denote $E$ the exploring chain in Algorithm \ref{alg:two} with stationary distribution $\pi_t$, and denote $E_c$ the component chain of $E$ (see Section \ref{sec:SDT}), then $\Gap(E_c) \ge c_1\; t^d$ for some constant $c_1$ that is independent of $t$ and $d$.
\end{theorem}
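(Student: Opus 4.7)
The plan is to lower-bound the off-diagonal entries $E_c(i,j)$ of the finite $m\times m$ component chain and then convert this to a spectral gap bound via Cheeger's inequality. Since the paper treats $m$ fixed, a uniform lower bound $E_c(i,j)\ge c\,t^d$ for all $i\ne j$, combined with a uniform lower bound on $\pi_t(A_i)$, translates to $\Gap(E_c)\ge c_1 t^d$ by the standard Cheeger argument on a finite reversible chain. From~(\ref{eqn:ph}), $E_c(i,j)=\frac{1}{2\pi_t(A_i)}\int_{A_i}E(x,A_j)\,\pi_t(dx)$, so the task reduces to producing a lower bound on $E(x,A_j)$ that holds on a set of $x\in A_i$ carrying $\pi_t$-mass bounded away from zero uniformly in $t$.

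The heart of the argument is to quantify how tempering enlarges the ``effective support'' of each mode. By Algorithm~\ref{alg:sw}, with probability $s$ the exploring chain uses its long-range proposal $h$, so $E(x,A_j)\ge s\int_{A_j} h(x,y)\,a_t(x,y)\,dy$ with $a_t(x,y)=\min\bigl(1,\tfrac{\pi_t(y)h(y,x)}{\pi_t(x)h(x,y)}\bigr)$. Since $\beta_\pi<\infty$ and the first moments $M_i$ are finite, all modes lie in a bounded region, and the fixed heavy-tailed $h$ satisfies $h(x,y)\ge c_h>0$ there, uniformly in $t$. The $\alpha$-smoothness assumption gives $\log \pi_{(j)}(y)\ge \log\pi_{(j)}(\beta_j)-\alpha|y-\beta_j|$, hence $\pi_t(y)\ge \pi_t(\beta_j)\,e^{-(\alpha/t)|y-\beta_j|}$ on $A_j$. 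Define the ``good region'' $B_j\subset A_j$ as the intersection of $A_j$ with the ball $\{|y-\beta_j|\le t/\alpha\}$; then $\pi_t(y)\ge e^{-1}\pi_t(\beta_j)$ on $B_j$, and $B_j$ has Lebesgue volume of order $t^d$ in the nontrivial regime where the bound $c_1 t^d\le 1$ is meaningful (outside this regime the claim is vacuous).

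Symmetrically, define a ``core'' $C_i\subset A_i$ on which $\pi_t(x)\le e\,\pi_t(\beta_i)$, which by log-concavity of $\pi_{(i)}$ carries $\pi_t$-mass bounded away from zero uniformly in $t$. For $x\in C_i$ and $y\in B_j$ the acceptance ratio is at least $\bigl[\pi(\beta_j)/\pi(\beta_i)\bigr]^{1/t}$ times a constant from the $h$-ratio, and since $t\ge 1$ this is bounded below uniformly in $t$. Combining the three bounds, $E(x,A_j)\ge s\,c_h\,|B_j|\,c_{\mathrm{acc}}\ge c\,t^d$ for $x\in C_i$, and averaging over $C_i$ yields $E_c(i,j)\ge c''\,t^d$; Cheeger's inequality then gives $\Gap(E_c)\ge c_1 t^d$.

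The main obstacle I anticipate is the geometric bookkeeping: ensuring the ball of radius $t/\alpha$ actually sits inside $A_j$ so that $|B_j|$ truly scales as $t^d$, and verifying that the $\pi_t$-mass of $C_i$ and $\pi_t(A_i)$ stay uniformly bounded below in $t$. Both require using log-concavity of the $\pi_{(i)}$ to get quantitative concentration and non-degeneracy for $\pi_t$ restricted to each $A_j$; in the regime where the $t^d$-scaling is nonvacuous, these reduce to elementary estimates from the $\alpha$-smoothness and the finiteness of the first moments $M_i$.
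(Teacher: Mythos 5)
Your argument is correct in substance and rests on exactly the mechanism the paper uses, but you execute it differently. The paper never constructs good regions: it lower-bounds the cross-mode flow by the infimum $a_t=\inf_{x\in A_1,y\in A_2}\min\bigl(h(y,x)/\xi_1(x),\,h(x,y)/\xi_2(y)\bigr)$, invokes Lemma \ref{lem:1}(c) to assert that tempering lowers the peak density by a factor $t^{-d}$ so that $a_t/a_1=c\,t^d$, and then converts $h'_{12}=c\,t^d h_{12}$ into a gap bound via the matrix inequality $\Gap(P_c)\ge m\min_{i\ne j}h_{ij}$ rather than Cheeger. Your ``effective support of volume $t^d$'' and the paper's ``peak drops by $t^{-d}$'' are dual statements of the same fact about $\alpha$-smooth log-concave densities, so the two proofs buy the same thing; yours is the more self-contained and explicit version (an absolute lower bound on $E_c(i,j)$ rather than a ratio against the untempered conductance), while the paper's is shorter because Lemma \ref{lem:1} has already packaged the $t^{d}$ scaling. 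Two small points of comparison: (i) your route through Cheeger squares the conductance, yielding $\Gap(E_c)\ge c^2t^{2d}/2$, which still implies the stated $c_1t^d$ for $t\ge1$ but with a worse constant than the paper's linear-in-conductance bound; (ii) the geometric issue you flag --- that the ball of radius $t/\alpha$ must sit inside $A_j$ for $|B_j|$ to scale as $t^d$, and that otherwise there is an intermediate range of $t$ where the bound is neither established nor vacuous --- is a real caveat, but it is present in the paper's proof too, hidden in the claim of Lemma \ref{lem:1}(c) that the $t^{-d}$ bound is tight (which implicitly requires $\int_{A_j}e^{-(\alpha/t)|x|}dx\sim t^d$). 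So you have not introduced a gap the paper avoids; you have merely made an implicit assumption visible.
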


However, a large $t$ increases dissimilarity between $\pi_t$ and $\pi$.  Large dissimilarity reduces acceptance ratio of long-range proposals in the sampling chain -- causing slow mixing. We quantify this intuition in the following theorem.  

\begin{theorem} \label{thm:sample}
Denote $S$ an (idealized) sampling chain in Algorithm \ref{alg:two}, using $\pi_t(x)$ as its long-range proposal instead of $\xi$,  and denote $S_c$ the component chain of $S$ (see Section \ref{sec:SDT}),  then $c_2 t^{-2d} \le \Gap(S_c) \le c_3\; t^{-d}$ for some constants $c_2, c_3$. In addition, $\Gap(S_c)$ is independent of the size of $\pi(x)$.
\end{theorem}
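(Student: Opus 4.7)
My plan is to analyze $S_c$ through its between-mode transition probabilities, then apply Cheeger's inequality in both directions. For $i\neq j$ only long-range proposals can cross the partition boundary, and by Algorithm \ref{alg:two} the acceptance ratio simplifies to $(\pi(y)/\pi(x))^{1-1/t}$ because $\pi_t\propto\pi^{1/t}$. Thus the starting point is the formula
\begin{equation*}
S_c(i,j)=\frac{s}{2\pi(A_i)}\int_{A_i}\pi(dx)\int_{A_j}\pi_t(y)\,\min\!\left(1,\left(\frac{\pi(y)}{\pi(x)}\right)^{\!1-1/t}\right)dy,
\end{equation*}
plus an $O(\text{local})$ term that I would show is negligible for a proposal with $D<\min_i M_i$. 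Once this representation is in hand, $\Gap(S_c)$ is bounded below and above by the conductance of the finite chain $S_c$ (Cheeger), so it suffices to estimate the flux $\pi(A_i)\,S_c(i,j)$ from one cut to its complement.

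For the upper bound $\Gap(S_c)\le c_3 t^{-d}$, I would isolate a single mode as a cut set and use $\alpha$-smoothness to argue that the fattened density $\pi_{(j)}^{1/t}$ has log-Lipschitz constant $\alpha/t$, hence spreads over a volume roughly $t^d$ times that of $\pi_{(j)}$. Consequently, the $\pi_t$-mass of any region of the original scale (where acceptance is bounded away from zero) is at most $O(t^{-d})$. Upper-bounding the acceptance by $1$ then gives $S_c(i,j)\le C t^{-d}$ and thus the stated upper gap bound. For the lower bound $\Gap(S_c)\ge c_2 t^{-2d}$, I would restrict both integrations to ``core'' regions $A_i^{\mathrm{core}},A_j^{\mathrm{core}}$ near the barycenters where $\pi(\cdot)\ge c\,\pi(\beta_\cdot)$. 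Log-concavity and $\alpha$-smoothness give $\pi(A_i^{\mathrm{core}})\gtrsim \pi(A_i)$ and, by the same fattening argument in reverse, $\pi_t(A_j^{\mathrm{core}})\gtrsim t^{-d}$. On $A_i^{\mathrm{core}}\times A_j^{\mathrm{core}}$ the acceptance $(\pi(y)/\pi(x))^{1-1/t}$ is bounded below by a constant depending only on the ratio of mode weights. Multiplying the two volume factors and the acceptance constant produces the $t^{-2d}$ bound and, by Cheeger, the lower gap bound.

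The independence from the ``size'' of $\pi$ would follow from the observation that all quantities entering the estimates are scale-invariant ratios: fattened-to-unfattened volumes (yielding only the factor $t^d$), and mode-to-peak density ratios (bounded by $\alpha$-smoothness in a way that cancels between numerator and denominator). Crucially the inter-mode distances $\beta_{ij}$ never appear, because $\pi_t$ is matched to $\pi$ so the proposal already ``knows'' where the modes are, unlike a generic heavy-tailed long-range kernel. The main obstacle I anticipate is the lower-bound step: in high dimensions, the $\pi$-mass of $A_i$ can concentrate on a thin annulus rather than at the peak, so the bound $\pi(A_i^{\mathrm{core}})\gtrsim \pi(A_i)$ needs either a careful choice of core radius calibrated to the typical value of $\pi(x)$, or a decomposition that controls bulk-to-core acceptance separately. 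The factor-$t^d$ gap between the upper and lower bounds reflects precisely the slack introduced by this concentration phenomenon and the need to localize both $x$ and $y$ simultaneously in the lower bound while only needing to bound one integral in the upper bound.
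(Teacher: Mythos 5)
Your high-level architecture coincides with the paper's: reduce to the component chain $S_c$, estimate its conductance, and convert to a spectral-gap bound. (The paper converts via Theorem \ref{Cheeger} in the two-state-decomposition setting, supplemented by the stochastic-matrix bound used in the proof of Theorem \ref{thm:explore}.) The genuine gap is in your upper bound on the conductance. You restrict to the region of $A_j$ "of the original scale, where acceptance is bounded away from zero," note its $\pi_t$-mass is $O(t^{-d})$, and bound the acceptance by $1$ there. But the complementary region carries $\pi_t$-mass $1-O(t^{-d})$, and the acceptance there, while decaying, is not negligible: writing the flux integrand in the symmetrized form $\min\bigl(\pi_i(x)\,\xi_j(y),\ \pi_j(y)\,\xi_i(x)\bigr)$ (with $\xi_i\propto\pi_i^{1/t}$), a one-dimensional computation with exponential modes shows that the low-acceptance region contributes \emph{exactly the same order} $t^{-d}$ as your core region, and the crude bounds $\min(1,a)\le 1$ and $\min(1,a)\le a$ each give only $O(1)$ after integrating over $x$ (the latter because $\xi_j(y)\,\pi_j(y)^{1-1/t}\propto\pi_j(y)/I_j$ and the $x$-integral of $\pi_i(x)^{1/t}$ restores a factor $t^{d}$). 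The paper closes this by rewriting the integrand as $\min\bigl(\xi_i(x)/\pi_i(x),\ \xi_j(y)/\pi_j(y)\bigr)\pi_i(x)\pi_j(y)$ and invoking Lemma \ref{lem:1} (with Lemma \ref{lem:2} for the normalization across pieces) to pin each ratio to $\asymp t^{-d}$; the two-sided min is precisely what controls the tail, and your sketch has no substitute for it.

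Two further points. First, you misattribute the source of the factor $t^{d}$ separating the upper and lower bounds: in the paper the conductance itself is pinned two-sidedly to $\asymp t^{-d}$, and the entire discrepancy between $c_3 t^{-d}$ and $c_2 t^{-2d}$ comes from Cheeger's inequality $\h_P^2/2\le\Gap(P)\le 2\h_P$, not from concentration of $\pi$-mass on annuli or the need to localize both variables. Relatedly, the sentence "multiplying the two volume factors and the acceptance constant produces the $t^{-2d}$ bound" is at best ambiguous: only one factor of $t^{-d}$ arises from the flux (from $\pi_t(A_j^{\mathrm{core}})\gtrsim t^{-d}$; the core fraction $\pi(A_i^{\mathrm{core}})/\pi(A_i)$ is a constant), so the conductance is $\gtrsim t^{-d}$ and $t^{-2d}$ appears only after squaring; if the flux itself were $t^{-2d}$, Cheeger would yield only $t^{-4d}$. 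Second, your lower bound on the core acceptance "by a constant depending only on the ratio of mode weights" quietly assumes the peak densities $\pi_{(i)}(\beta_i)$ are comparable across modes; the paper's use of the piecewise-normalized ratios $\xi_i/\pi_i$ is what removes this dependence and underlies the claim that $\Gap(S_c)$ does not depend on the size of $\pi$.
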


\begin{remark}[Equilibrium Assumption]
Note that Theorem \ref{thm:sample} considers a Small-World chain that uses $\pi_t$ as its long-range proposal. The connection with Algorithm \ref{alg:two} is that, as the number of iterations increases, the long-range proposal $\xi_n$ used by the Algorithm becomes an increasingly close approximation to $\pi_t$
(indeed $\xi_n \rightarrow \pi_t$ weakly). Thus, intuitively, the result in the theorem represents the ``asymptotic" behavior of the algorithm. 
%In fact establishing similar results for the algorithm directly is considerable harder, as we elaborate in later sections.
\end{remark}
 
%%%%%%%%%%%%%%%%%%%%%%%%%%%%%%%%%%%%%%%%%%%%%%%%%%
Obviously, for Algorithm \ref{alg:two} to converge quickly, both the exploring chain $E$ and the sampling chain $S$ must converge quickly. Therefore,
we want to increase $g=\min(\Gap(E_c), \Gap(S_c))$. 
Suppose we set $t=1$ in Algorithm \ref{alg:two},  then the (idealized) sampling chain converges instantaneously (because it proposes from the target distribution). We have $g=\Gap(E_c)$, which is spectral gap of a Small-World chain. 
When we increase $t$ in Algorithm \ref{alg:two}, we enlarge a small spectral gap, $\Gap(E_c)$, and pay a penalty by shrinking a large spectral gap, $\Gap(S_c)$, to achieve faster mixing.  This trade-off works well because $\Gap(E_c)$ depends on target density and can be extremely small; while $\Gap(S_c)$ is independent of the target density.  However, when $d$ is large, even for a modest $t$, the penalty on $\Gap(S_c)$ can be large. This lead us to extend the algorithm to multiple chains, instead of just two. 
 
\begin{algorithm}[STEEP] \label{alg:many}
Let $\{t_i\}$ be a sequence of positive numbers with $1= t_0  < t_1 < \cdots < t_H$.  Let $\{X^\ii_n\}$ sample $\pi_{i}(x) \propto \pi(x)^{1/{t_i}}$.  Let $\xi^\ii$ be the empirical measures by the $i$-th chain.  We call the $H$-th chain \emph{exploring} chain, the rest \emph{sampling} chains.
\begin{enumerate}
\item Simulate $X^{(H)}_{n+1} | X^H_{n} = x^{(H)}_i$ with Algorithm \ref{alg:sw} to obtain $x^{(H)}_{n+1}$. 
Update empirical measure $\xi^{(H)}_{n+1} = n/(n+1) \xi^{(H)}_n + 1/(n+1)\delta{(X^{(H)}_{n+1})}$. 
\item For each $i = H-1, H-2, \dots, 1, 0$:
\begin{itemize}

\item Simulate $X^\ii_{n+1}|X^\ii_n=x^\ii_n$ using Algorithm \ref{alg:sw} to obtain $x^\ii_{n+1}$. 
Modify step (\ref{long-range-step}) so that $y \sim \xi^{(i+1)}_{n+1}$ and  compute
%\begin{equation} \label{accept}
$a =\frac{\pi_i(y)}{\pi_i(x^\ii_n)}\frac{\pi_{i+1}(x^\ii_n)}{\pi_{i+1}(y)}.$
%\end{equation}
\item Set $X^\ii_{n+1} \leftarrow y$ with probability $\min(1,a)$, otherwise $X^\ii_{n+1} \leftarrow x^\ii_n$. 
\item Update empirical measure $\xi^\ii_{n+1} = n/(n+1) \xi^\ii_n + 1/(n+1)\delta{(x^\ii_{n+1})}$. 
\end{itemize}
\item Set $n\leftarrow n+1$, goto step 1. 

\end{enumerate}
\end{algorithm}

We may determine \emph{optimal} temperatures $\{t_i\}$ in Algorithm \ref{alg:many}.  
Assuming $t_H$ fixed and all chains reach equilibrium, denote $G_i$ a Markov chain that samples $\pi_{i}(x)$ with long-range proposal $\pi_{i+1}(x)$, and let $g_i$ be the spectral gap of the component chain of $G_i$, for $i=0,\cdots, H-1$.
Theorems \ref{thm:sample} implies that
\begin{equation} \label{eqn:g1}
g_0= c_0 ({t_1}/{t_0})^{- d}, g_1 = c_1 ({t_2}/{t_1})^{-d}, \dots, g_{H-1} = c_{H-1} ({t_H}/{t_{H-1}})^{- d},
\end{equation}
for some constants $c_0, \dots, c_{H-1}$.  
We want to choose $\{t_i\}$ to optimize
\begin{equation} \label{eqn:g}
g= \min(g_1, \dots, g_{H-1}).
\end{equation}

Assume $c_1 = \cdots = c_H$ in Equations (\ref{eqn:g1}), then for a fixed $t_H$, we have $g_0 = \cdots = g_{H-1}$ maximizes $g$ defined in (\ref{eqn:g}).  This implies that $\{t_i\}$ is a geometric progression because of (\ref{eqn:g1}) and assumption on $c_i$'s. 
Such a geometric progression on adjacent temperatures has been suggested in both parallel tempering \citep{physics} and the Equi-Energy sampler \citep{ee}. The former is based on the optimality of the acceptance ratio of swaps between adjacent chains and the later is based on empirical evidence.
Our argument here provides an additional justification based on spectral gap.  

The exploring chain in Algorithm \ref{alg:many} can benefit from a large $t_H$ with a modest $\tau=t_{k+1}/t_k$ and a modest $H$ thanks to the geometric progression. A large $t_H$ makes the ``exploring" easier, while the penalties spread out across sampling chains. We will discuss how to choose $t_H$ and $\tau$ in practice in Section \ref{sec:app}. 

%%%%%%%%%%%%%%%%%%%%%%%%%%%%%%%%%%%%%%%%%%%%%%%%%%
\subsection{Finding Global Optimum}
We may extend the STEEP to find the global optimum of a density $\pi(x)$ -- tempering does the trick. Specifically, in Algorithm \ref{alg:many}, instead of stop at $t=1$, we continue the process at $t=\tau^{-k}$ for $k = 1, \cdots, C$.  As $k$ increases, each mode becomes steeper and the probability concentrates around the global maximum. The samples collected in the last chain will be very close to the global optimum.
We have the following Lemma.
\begin{lemma}
Let $t_k = \tau^k$ for $k = H, H-1, \cdots, 1, 0, -1, ... -C$, then for a sufficiently large $C$, the samples collected by the last chain in Algorithm \ref{alg:many} will be arbitrarily close to the global maximum.  
\end{lemma}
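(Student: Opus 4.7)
My plan is to combine two ingredients: the standard Laplace-type observation that $\pi^{1/t}$ concentrates on the global maximizer as $t\to 0$, together with the ergodicity results established earlier for Algorithm \ref{alg:many}, now applied to the cold chain at temperature $t_{-C}=\tau^{-C}$. First I would check that the recursive construction of Algorithm \ref{alg:many} extends verbatim to negative indices $k\in\{-1,\dots,-C\}$: each chain targets $\pi_k(x)\propto\pi(x)^{1/t_k}=\pi(x)^{\tau^{-k}}$, uses $\xi^{(k+1)}_n$ as its long-range proposal, and satisfies the acceptance formula of Algorithm \ref{alg:many}. The coldest chain has target $\pi_{-C}(x)\propto\pi(x)^{\tau^{C}}$, with the exponent $\tau^C\to\infty$ as $C\to\infty$.

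The proof would proceed in three steps. Step 1 (concentration of $\pi_{-C}$). Let $x^\star\in\arg\max\pi$ with $\pi^\star=\pi(x^\star)$ (well defined because each $\pi_{(i)}$ is log-concave and the partition is finite). For any $\epsilon>0$, the sublevel set $N_\epsilon=\{x:\pi(x)\ge\pi^\star-\epsilon\}$ has positive Lebesgue measure, while outside $N_\epsilon$ we have $\pi(x)/\pi^\star\le 1-\epsilon/\pi^\star<1$. A direct Laplace estimate then gives
\begin{equation*}
\pi_{-C}(N_\epsilon^c)=\frac{\int_{N_\epsilon^c}\pi(x)^{\tau^C}dx}{\int_{\xx}\pi(x)^{\tau^C}dx}\xrightarrow{C\to\infty}0.
\end{equation*}
Step 2 (ergodicity at $t_{-C}$). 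Invoke inductively the ergodic theorem of Section \ref{sec:ergodic}: the chain at temperature $t_{-C+1}$ produces an empirical measure $\xi^{(-C+1)}_n$ that converges weakly to $\pi_{-C+1}$, and hence the $(-C)$-th chain is an adaptive Small-World chain whose transition kernel converges to the Small-World kernel with long-range proposal $\pi_{-C+1}$ and stationary distribution $\pi_{-C}$. Consequently the marginal law of $X^{(-C)}_n$ converges to $\pi_{-C}$ in total variation. Step 3 (combine). Given $\epsilon>0$, pick $C$ by Step 1 so that $\pi_{-C}(N_\epsilon)\ge 1-\epsilon$, then pick $n$ by Step 2 so that $\|\mathrm{Law}(X^{(-C)}_n)-\pi_{-C}\|_{TV}\le\epsilon$. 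A triangle inequality yields $\Pr(X^{(-C)}_n\in N_\epsilon)\ge 1-2\epsilon$, which is the stated claim.

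The main obstacle is Step 2: verifying that the ergodicity machinery of Section \ref{sec:ergodic} still applies to the colder chains. Two things could go wrong. First, the minorization and drift conditions (A1--A3) have to be shown to hold for each $\pi_k$ with $t_k<1$; because $\pi_{-C}$ becomes very peaked, the drift constants could in principle degrade with $C$, but $C$ is fixed before the ergodicity argument is invoked so this is only a matter of re-verifying (A1--A3) at each fixed $C$, exactly as sketched in the comment after the conditions. Second, and more delicate, the acceptance ratio $\pi_k(y)\pi_{k+1}(x)/(\pi_k(x)\pi_{k+1}(y))$ for a long-range proposal drawn from $\xi^{(k+1)}_n\approx\pi_{k+1}$ could become small when $t_k\ll t_{k+1}$; the geometric spacing $t_{k+1}/t_k=\tau$ keeps this ratio controlled uniformly in $k$, which is precisely why the same geometric progression used for the warm chains is the right choice on the cold side as well. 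Given these checks, Steps 1 and 3 are elementary, and the lemma follows.
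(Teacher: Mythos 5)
Your proposal is correct and follows essentially the same route as the paper: the heart of both arguments is that $\pi_t\propto\pi^{1/t}$ concentrates on the set of global maximizers as $t\to 0$ (equivalently as $C\to\infty$), with the ergodic theorem of Section \ref{sec:ergodic} supplying the link between the coldest chain's samples and $\pi_{t_{-C}}$. The only differences are cosmetic: you phrase the concentration step as a generic Laplace estimate on the superlevel set $N_\epsilon$ rather than via the $\alpha$-smooth exponential-decay bound the paper uses, and you spell out the ergodicity and triangle-inequality steps that the paper leaves implicit.
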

\begin{proof}
Let $\pi(x)$ be an $\alpha$-smooth log-concave density that attains its unique maximum at $0$. Assume $\pi_t(x) \propto \pi(x)^{1/t}$.  It is sufficient to show that for any $\epsilon > 0$  there exists $\delta>0$, such that for any $0<t<\delta$ we have $\int_{B_\epsilon} \pi_t(x) > 1-\epsilon$, where $B_{\epsilon}$ is a ball centered at $0$ with radii $\epsilon$.  However, the claim hold trivially because $\pi_t(x)$ decays exponentially at rate $\alpha / \delta$ which can be made arbitrarily large by letting $\delta \rightarrow 0$ (or equivalently increasing $C$).  Extension to mixture of $\alpha$-smooth log-concave densities is trivial if the mixture has a unique global optimum. If the mixture density has multiple global optima, then each global optimum $x_j$ has a ball $B_{j,\epsilon}$ centered around it and the samples collected will concentrate in  $B_{j,\epsilon}$. And each global optimum can be found. 
\end{proof}

\subsection{Connection with the Equi-Energy sampler}

Although our algorithm shares several similarities with
 (and benefits of) the Equi-Energy sampler, there are also several differences.
 First, our algorithm is conceptually simpler, effectively because it
 dispenses with the energy ring
 and energy cut-off of the Equi-Energy sampler. This makes it easier to
 understand, and perhaps also slightly
 easier to implement, which we believe could facilitate
 its more widespread use (despite its appeals, the Equi-Energy sampler appears
 not to have been used very extensively in practical applications).
 Second, from a more technical
 perspective, our algorithm makes
 use of a small-world (fast converging) chain as the exploring chain.  This has two advantages. First, Theorem \ref{thm:explore} gives us insights on why the tempering helps, while such a theorem does not hold for a local chain used in the Equi-Energy sampler.  Second, it has been argued \citep{jarner.roberts.07} that, if the target distribution $\pi$ is heavy tailed, the proposal distributions should have heavy tails as well. When combining high temperature with energy cutoff, the target distribution is likely to behave like a heavy tailed distribution, which will be a challenge for a local chain, while not much so for a Small-World chain.
 In addition, STEEP provides a different perspective in that every chain in STEEP is a Small-World chain, and a hot chain provides informed long-range proposals for the adjacent cold chain.  However, in practice, at least for the simple examples given in the end of the paper, we would not expect much difference in performance between STEEP and the Equi-Energy sampler. 

%%%%%%%%%%%%%%%%%%%%%%%%%%%%%%%%%%%%%%%%%%%%
%%%%%%%%%%%%%%%%%%%%%%%%%%%%%%%%%%%%%%%%%%%%
\def\tY{\widetilde{Y}}
\def\ty{\tilde{y}}
\def\mF{\mathcal{F}}
\def\mf{\mathcal{F}}
\def\mX{\mathcal{X}}
\def\mY{\mathcal{Y}}

%\ref{thm:ergodic}=\ref{thm:converge}
%\ref{thm:converge} = \ref{thm:mixingale}+\ref{lem:pointset}
%\ref{thm:mixingale} = \ref{alpha}

\section{Ergodicity} \label{sec:ergodic}
We devote this section to state and prove ergodicity, Theorem \ref{thm:ergodic}, for Algorithm \ref{alg:many} (multiple chains). 
%The proof depends on the ergodic Theorem \ref{thm:converge} for Algorithm \ref{alg:two} (two chains). The foundation of Theorem \ref{thm:converge} is the mixingale theorem, Theorem \ref{thm:mixingale}. There are two technical lemmas: Lemma \ref{alpha} contributes to prove Theorem \ref{thm:mixingale}; and Lemma \ref{lem:pointset} helps Theorem \ref{thm:mixingale} to prove Theorem \ref{thm:converge}. 
%
We first extend a key result of \cite[][Lemma $3.1$]{atchade+rosenthal} to two chains (Algorithm \ref{alg:two}) to obtain Theorem \ref{thm:mixingale}. Then we extend a theorem in \cite[][Theorem $3.3$]{atchade+liu} to $V$-geometric  ergodic  to obtain Theorem \ref{thm:converge}. Our proof follow closely to theirs. There are two technical lemmas: Lemma \ref{alpha} contributes to prove Theorem \ref{thm:mixingale}; and Lemma \ref{lem:pointset} helps Theorem \ref{thm:mixingale} to prove Theorem \ref{thm:converge}. We then state and prove Theorem \ref{thm:ergodic}.  We conclude this section by discussing convergence rate of Algorithms \ref{alg:two} and \ref{alg:many}.

\medskip

In Algorithm \ref{alg:two} $\{Y_n\}$ is the exploring chain and $\{X_n\}$ is the sampling chain.  Let $X_0=x_0$, $Y_0=y_0$.
Let $\{E_n\}$ denote a sequence of operators that generates $\{Y_n\}$, where $E_i$ may be identical. 
Denote $E_{1:n} = E_1 \cdots E_n$. Recall $\xi_n$ is an empirical measure generated by process $\{Y_n\}$. 
%Define empirical measure
% \begin{equation} \label{def:xin}
% \xi_n = \frac{1}{n}\sum_{i=1}^n{\delta{(Y_i)}}.
% \end{equation}
 Let $\{P_{\xi_n}\}$ denote the sequence of operators that generates $\{X_n\}$, where
 \begin{equation} \label{def:pxi}
P_{\xi_n}(x, A)  = (1-s) T(x, A) + s K_{\xi_n}(x, A),
\end{equation}
where $T$ is a local chain with stationary distribution $\pi$ and
\begin{equation}
K_{\xi_n} f(x) = \frac{1}{n}\sum_{j=1}^{n}{\alpha(x, y_j) f(y_j)} + \frac{1}{n} f(x) \sum_{j=1}^{n}{(1-\alpha(x,y_j))},
\end{equation}
where $y_i$'s are samples that induce empirical measure $\xi_n$, and 
$\alpha(x,y) = \min\left(1, \frac{\pi(y)}{\pi(x)} \frac{\pi_t(x)}{\pi_t(y)} \right)$. For a sequence of operators $\{P_n\}$, denote $P_{i:j} = P_i P_{i+1} \cdots P_j$ for $i< j$, $P_{i:i}=P_i$, and $P_{i:j} = I$ if $i > j$.
 Let $\mF_n^X=\sigma(X_1, \dots, X_n)$ and $\mF_n^Y=\sigma(Y_1, \dots, Y_n)$ be the filtrations of process $\{X_n\}$ and $\{Y_n\}$ respectively, and
let $\mF_n = \sigma(X_1, \dots, X_n, Y_1, \dots, Y_n)$.

%%%%%%%%%%%%%%%%%%%%%%%%%%%%%%%%%%%%%%%%%%%%%
The following Lemma is needed to prove Theorem \ref{thm:mixingale}.
 \begin{lemma} \label{alpha}
Define $\alpha_n := ||\xi_{n} - \xi_{n-1}||_V$, 
and assume
there exist constants $M<\infty$ and $\rho \in (0,1)$, such that for each $x \in \mX$
$||P_{\xi_n}^j(x, \cdot) - \pi_{\xi_n}(\cdot)||_V \le M \rho^j V(x)$ $P_{y_0}$-a.s., 
%\item $\sup_{\xi_n} \pi_{\xi_n}(V) < \infty$.
then
\begin{enumerate}
\item $\alpha_n$ is $\mF_n^Y$ measurable and $\E_{y_0}(\alpha_n) \le O({1}/{n})$.
\item $||P_{\xi_{n}}(x, \cdot) - P_{\xi_{n-1}}(x, \cdot)||_V < 2s\; V(x) \cdot \alpha_n$.
\item $||\pi_{\xi_{n}} - \pi_{\xi_{n-1}}||_V \le M \alpha_n$ for some constant $M$.
\end{enumerate}
\end{lemma}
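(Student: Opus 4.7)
The plan is to handle the three claims in sequence, each leaning on the previous.

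For part (1), start from the defining recursion $\xi_n = \frac{n-1}{n}\xi_{n-1} + \frac{1}{n}\delta_{y_n}$, which yields the single-step identity
\[
\xi_n - \xi_{n-1} = \frac{1}{n}\bigl(\delta_{y_n} - \xi_{n-1}\bigr).
\]
Taking $V$-norms and the triangle inequality gives $\alpha_n \le \frac{1}{n}\bigl(V(y_n) + \xi_{n-1}(V)\bigr)$. Measurability with respect to $\mF_n^Y$ is immediate because $\xi_n$ is a deterministic functional of $Y_1,\ldots,Y_n$. To control the expectation, write $\xi_{n-1}(V) = \frac{1}{n-1}\sum_{j=1}^{n-1} V(Y_j)$ and invoke the $V$-geometric ergodicity of the exploring chain (which holds under A1--A3) to conclude that $\sup_j \E_{y_0}[V(Y_j)] < \infty$. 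Combining the two terms gives $\E_{y_0}[\alpha_n] = O(1/n)$.

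For part (2), the local component $T$ cancels in the difference, leaving $P_{\xi_n}-P_{\xi_{n-1}} = s(K_{\xi_n}-K_{\xi_{n-1}})$. Splitting $K$ into its ``accept'' and ``reject'' pieces yields
\[
(K_{\xi_n}-K_{\xi_{n-1}})f(x) = \int \alpha(x,y)f(y)(\xi_n-\xi_{n-1})(dy) + f(x)\int (1-\alpha(x,y))(\xi_n-\xi_{n-1})(dy).
\]
For $|f|\le V$, the first integrand is dominated by $V(y)$ (since $\alpha\in[0,1]$), producing a bound of $\alpha_n$; the second integrand is dominated by $1 \le V(y)$ and multiplied by $|f(x)|\le V(x)$, producing $V(x)\alpha_n$. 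Because $V\ge 1$ the sum is at most $2V(x)\alpha_n$, yielding the claim after the factor $s$.

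Part (3) is the real obstacle: one must convert a kernel perturbation into a perturbation of the corresponding (unknown) stationary distribution, without an explicit formula for $\pi_{\xi_n}$. The plan is a Poisson-equation/resolvent argument. For $|f|\le V$, define $\hat f := \sum_{j\ge 0}\bigl(P_{\xi_n}^j f - \pi_{\xi_n}(f)\bigr)$; the hypothesis $\|P_{\xi_n}^j(x,\cdot)-\pi_{\xi_n}(\cdot)\|_V \le M\rho^j V(x)$ makes the series converge absolutely with $|\hat f(x)|\le \tilde M V(x)$, where $\tilde M = M/(1-\rho)$, and $\hat f$ solves $(I-P_{\xi_n})\hat f = f - \pi_{\xi_n}(f)$. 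Integrating against $\pi_{\xi_{n-1}}$ and using stationarity $\pi_{\xi_{n-1}}P_{\xi_{n-1}} = \pi_{\xi_{n-1}}$ to cancel one term gives the key identity
\[
\pi_{\xi_{n-1}}(f) - \pi_{\xi_n}(f) = \pi_{\xi_{n-1}}\!\bigl((P_{\xi_{n-1}}-P_{\xi_n})\hat f\bigr).
\]
Now apply part (2) to $\hat f/\tilde M$ (which lies in the unit ball of the $V$-norm) to bound the integrand pointwise by $2s\tilde M V(x)\alpha_n$, then integrate against $\pi_{\xi_{n-1}}$ and close the estimate with the uniform bound $\pi_{\xi_n}(V)\le b/(1-\lambda)$, obtained by integrating the drift condition (A3) against $\pi_{\xi_n}$. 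This yields $\|\pi_{\xi_n}-\pi_{\xi_{n-1}}\|_V \le M'\alpha_n$ with $M' = 2sb\tilde M/(1-\lambda)$. The delicate point throughout is that the whole family $\{P_{\xi_n}\}$ must satisfy the minorization and drift conditions with constants $(\epsilon,\lambda,b)$ that do not depend on $n$, and hence with common $(M,\rho)$ in the geometric-ergodicity bound — a uniformity that must be established separately for the sampling chain's kernels.
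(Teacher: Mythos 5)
Your parts (1) and (2) follow the paper's proof essentially verbatim: the same one-step recursion bound $\alpha_n \le \frac{1}{n}\bigl(V(Y_n)+\xi_{n-1}(V)\bigr)$ combined with a uniform moment bound on $\E_{y_0}[V(Y_j)]$, and the same cancellation of the local component $T$ followed by splitting $K_{\xi_n}-K_{\xi_{n-1}}$ into its accept and reject pieces (your version of the estimate is in fact written more carefully than the paper's display, which loosely replaces $\alpha(x,y)f(y)$ and $1-\alpha(x,y)$ by $f(y)+f(x)$ before taking the supremum). Part (3) is where you genuinely diverge. The paper disposes of it in two lines by citing Lemma B.1 of Andrieu et al., which bounds $|(P_{\xi_n}^k-P_{\xi_{n-1}}^k)(x,f)|$ uniformly in $k$ by a constant times $\sup_x \|P_{\xi_n}(x,\cdot)-P_{\xi_{n-1}}(x,\cdot)\|_V/V(x)$, and then lets $k\to\infty$ so that the iterated kernels converge to the respective invariant measures. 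You instead build the Poisson/resolvent solution $\hat f=\sum_{j\ge 0}\bigl(P_{\xi_n}^jf-\pi_{\xi_n}(f)\bigr)$, use stationarity of $\pi_{\xi_{n-1}}$ to get the identity $\pi_{\xi_{n-1}}(f)-\pi_{\xi_n}(f)=\pi_{\xi_{n-1}}\bigl((P_{\xi_{n-1}}-P_{\xi_n})\hat f\bigr)$, and close with part (2) and a bound on $\pi_{\xi_{n-1}}(V)$. Both are standard and correct perturbation arguments for the invariant measure; yours is self-contained and yields explicit constants, while the paper's is shorter but outsources the key estimate and leaves its constant implicit. Two small remarks: the uniform bound on $\pi_{\xi_{n-1}}(V)$ that you obtain from the drift condition can alternatively be read off from the lemma's own hypothesis (at $j=0$ it gives $\pi_{\xi_{n-1}}(V)\le (1+M)\inf_x V(x)$), so no assumption beyond the stated one is needed for that step; and your closing caveat about the family $\{P_{\xi_n}\}$ requiring constants uniform in $n$ is well taken --- the paper builds that uniformity into the hypothesis rather than verifying it.
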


\begin{proof}
$\alpha_n = \sup_{|f|<V} |\xi_{n}(f) - \xi_{n-1}(f)| \le \frac{1}{n} |V(Y_{n}) + \frac{1}{n-1}\sum_{i=1}^{n-1}{V(Y_i})|$ and note that $\E_{y_0}(V(Y_n)) < \infty$ for all $n$ and that $\E_{y_0}(V(Y_n)) \rightarrow \E(V) < \infty$ as $n\rightarrow \infty$, and claim (1) follows.  
Let $|f| < V$, then
\begin{eqnarray} \label{eqn:Pdiff}
\begin{aligned}
|P_{\xi_{n}}&(x, f) - P_{\xi_{n-1}}(x,f)| \\
&= s \left| \int{\alpha(x,y) f(y) [\xi_{n} - \xi_{n-1}](dy)} + f(x) \int{[1-\alpha(x,y)] [\xi_{n} - \xi_{n-1}](dy)} \right| \\
& \le  s \left|\int{ [f(y) + f(x)] [\xi_{n} - \xi_{n-1}](dy)} \right| \\
& \le  2\;s V(x) \left|\int{V(y) [\xi_{n} - \xi_{n-1}](dy)} \right| \\
& \le  2\;s V(x) ||\xi_{n}-\xi_{n-1}||_V,  
\end{aligned}
\end{eqnarray}
and the conclusion (2) follows.
From Lemma B.1. of \citet{andrieu+etal}, we have
$$|(P_{\xi_{n}}^k - P_{\xi_{n-1}}^k)(x, f)| \le M \sup_{x\in \xx} \frac{||(P_{\xi_{n}}(x,\cdot) - P_{\xi_{n-1}}(x, \cdot)||_V}{V(x)},$$
for all $k$. Let $k\rightarrow \infty$ and combine with (2) to get desired result (3).  
\end{proof}
%%%%%%%%%%%%%%%%%%%%%%%%%%%%%%%%%%%%%%%%%%%%%%%%%

%%%%%%%%%%%%%%%%%%%%%%%%%%%%%%%%%%%%%%%%%%%%%%%%%%
\begin{theorem} \label{thm:mixingale}
In the proceeding framework, and assume
\begin{enumerate}  
\item For each $x \in \mX$, $||E_{1:n}(x, \cdot) - \pi_t(\cdot)||_V \rightarrow 0$, as $n\rightarrow \infty$.
\item There exist constants $M<\infty$ and $\rho \in (0,1)$, such that for each $x \in \mX$
$||P_{\xi_n}^j(x, \cdot) - \pi_{\xi_n}(\cdot)||_V \le M \rho^j V(x)$ $P_{y_0}$-a.s.
%\item $\sup_{\xi_n} \pi_{\xi_n}(V) < \infty$.
\end{enumerate}
In addition, for finite constant $c_1, c_2$, define
\begin{equation}
B(c_1, c_2, n) =  \min_{1\le k \le n} [c_1\frac{k}{n-k} + c_2 \rho^k] .
\end{equation}
%we have $B(c_1, c_2, n) = \mathcal{O}(1/n^\epsilon)$ for some $\epsilon>0$.
Define $g_{k, \xi_k} = f - \pi_{\xi_k}(f)$. Then for any $|f|\le V$, we have
\begin{equation} \label{eqn:mixingale1}
|\E_{(x_0, y_0)}{(g_{n+j,\; \xi_{n+j}}(X_{n+j}) \big | \mf_n)}| \le B(k_1, k_2, j)V(X_n)
\end{equation}
$P_{x_0,y_0}$ a.s.  and as an immediate consequence,
%As a direct consequence of (\ref{eqn:mixingale1}),
\begin{equation} \label{eqn:ergodic1}
|\E_{(x_0, y_0)}{\left(f(X_{n})-\pi_{\xi_n}(f)\right)|} \le B(k_1, k_2, n)V(x_0).
\end{equation}
In addition,
\begin{equation}\label{eqn:ergodic2}
\frac{1}{n} \sum_{i=1}^{n}{\left( f(X_i) - \pi_{\xi_i}(f)\right)} \rightarrow 0  \hspace{.1in} \mbox{ as } n \rightarrow \infty, P_{x_0, y_0}-{a.s.}   
\end{equation}

\end{theorem}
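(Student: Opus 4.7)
The plan is to adapt the $L^2$-mixingale decomposition of \cite{atchade+rosenthal} to the $V$-norm setting, exploiting that the exploring chain evolves slowly through the perturbation bounds supplied by Lemma~\ref{alpha}. Conditioning on $\mF_n$, which carries both $X_n$ and $\xi_n$, I would write
\begin{equation*}
\E(g_{n+j,\xi_{n+j}}(X_{n+j})\mid\mF_n) = \bigl[(P_{\xi_{n+1:n+j}}f)(X_n) - \pi_{\xi_n}(f)\bigr] + \bigl[\pi_{\xi_n}(f) - \E(\pi_{\xi_{n+j}}(f)\mid\mF_n)\bigr],
\end{equation*}
fix a split index $1\le k\le j$, and insert the frozen-kernel object $P_{\xi_n}^j f(X_n)$ into the first bracket. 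The frozen residual $|P_{\xi_n}^j f(X_n) - \pi_{\xi_n}(f)|$ is at most $M\rho^j V(X_n)$ directly from assumption~(2), while the perturbation $(P_{\xi_{n+1:n+j}} - P_{\xi_n}^j)f$ is expanded via the standard telescoping
$P_{\xi_{n+1:n+j}} - P_{\xi_n}^j = \sum_{i=1}^{j} P_{\xi_{n+1:n+i-1}}(P_{\xi_{n+i}} - P_{\xi_n})P_{\xi_n}^{j-i}$,
and controlled using Lemma~\ref{alpha}(2) for the kernel increments together with the $V$-drift of each factor.

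Splitting this sum at the index $k$ is what produces the form of $B(k_1,k_2,j)$. For indices $i\le k$ there is no contraction to rely on, but the kernels differ only by $\sum_{l=n+1}^{n+i}\alpha_l$, which by Lemma~\ref{alpha}(1) has conditional mean of order $k/(j-k)$ once summed; this gives the first term in the definition of $B$. For indices $i>k$ each summand additionally carries at least $j-k$ applications of the frozen kernel, so assumption~(2) yields a contraction factor $\rho^{j-i}$ whose worst value is $\rho^k$, giving the second term. The second bracket $\pi_{\xi_n}(f) - \E(\pi_{\xi_{n+j}}(f)\mid\mF_n)$ is handled in the same spirit using Lemma~\ref{alpha}(3). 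Optimizing over $k$ yields (\ref{eqn:mixingale1}); setting $n=0$ immediately gives (\ref{eqn:ergodic1}).

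For the almost-sure Ces\`aro statement (\ref{eqn:ergodic2}), let $Z_i := f(X_i) - \pi_{\xi_i}(f)$. Choosing $k = \lceil C\log j\rceil$ makes $B(k_1,k_2,j) = O(\log j/j)$, so (\ref{eqn:mixingale1}) together with $\sup_n \E_{(x_0,y_0)}V(X_n)<\infty$ (from the drift condition) shows $\{Z_i\}$ is an $L^1$-mixingale of summable size adapted to $\{\mF_i\}$. A McLeish-type maximal inequality for mixingales then gives $n^{-1}\sum_{i=1}^n Z_i\to 0$ both in $L^1$ and $P_{(x_0,y_0)}$-almost surely.

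The main obstacle I anticipate is the $V$-norm bookkeeping in the telescoping expansion: showing that the composition $P_{\xi_n}^{j-i}$ applied to a residual of the form $P_{\xi_{n+1:n+i-1}}(P_{\xi_{n+i}} - P_{\xi_n})$ does not inflate the $V$-norm beyond what the drift condition can absorb. This requires a uniform-in-$\xi$ drift/minorization for the family $\{P_\xi\}$, which should hold because the mixture (\ref{def:pxi}) inherits drift from the local chain $T$ and the Metropolis truncation keeps the long-range component contractive in $V$. A secondary subtlety is that $\pi_{\xi_{n+j}}$ depends on $Y$-variables past time $n$, so the Lemma~\ref{alpha}(3) estimate must be applied inside the conditional expectation rather than pathwise, which is where the mixingale formulation is cleaner than a direct almost-sure bound.
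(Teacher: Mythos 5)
Your proposal follows essentially the same route as the paper: the same decomposition of $\E(g_{n+j,\xi_{n+j}}(X_{n+j})\mid\mF_n)$ into a frozen-kernel residual bounded by assumption (2), a telescoped perturbation term controlled by Lemma~\ref{alpha}(2) together with the drift condition, a correction for the drift of $\pi_{\xi_\cdot}$ via Lemma~\ref{alpha}(3), and a mixingale law of large numbers for the Ces\`aro statement (the paper cites Corollary $2.1$ of Davidson and de Jong, which is exactly the McLeish-type result you invoke). The one genuine divergence is the mechanism that produces the minimum over $k$ in $B(k_1,k_2,j)$. The paper first derives the crude bound $[M_5\,\alpha_n\, j + M_2\rho^j]V(X_n)$ and then applies it at the intermediate time $n+j-k$ via the tower property, $\E(\cdot\mid\mF_n)=\E\big(\E(\cdot\mid\mF_{n+j-k})\mid\mF_n\big)$; it is this re-conditioning that makes the factor $\E_{y_0}(\alpha_{n+j-k})=O(1/(n+j-k))$ available and hence yields the term $k/(n+j-k)$. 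Your alternative of splitting the telescoping sum at index $k$ does not by itself deliver this: the increments $P_{\xi_{n+i}}-P_{\xi_n}$ for $i\le k$ contribute roughly $\sum_{l\le i}\alpha_{n+l}$, whose expectation is of order $i/n$, so summing over $i\le k$ gives $O(k^2/n)$ rather than $k/(j-k)$, and the denominator $j-k$ has no source in your expansion. You would either have to accept a bound of a different (though still vanishing) form, or fall back on the paper's filtration trick. Two smaller points: the sequence $Z_i=f(X_i)-\pi_{\xi_i}(f)$ must be centered by subtracting $\E_{x_0,y_0}(Z_i)$ before it is a mixingale (the paper does this and disposes of the means separately using \eqref{eqn:ergodic1} and \eqref{eqn:ergodic3}); and $\psi_m=O(m^{-1}\log m)$ is of size $-1/2$ but not summable, which is all the cited corollary requires. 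Your closing observation that the argument needs drift and geometric ergodicity uniformly over the family $\{P_\xi\}$ is well taken --- the paper buries this in assumption (2) by demanding the same $M$ and $\rho$ for every $\xi_n$.
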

 
  \begin{proof}[Proof of Theorem \ref{thm:mixingale}]
 %grant;
 For notational convenience, we denote $P_{\xi_n}$ by $P_n$ in the following equation.  
% We use the following well-known identity
% \begin{equation*}
% P_{1:n} = \sum_{k=1}^{n-1}{P_{1}^{k}(P_{k+1}-P_{1})P_{(k+2):n} },
% \end{equation*}
% and note that for a constant function $c$, $P_{i}(x,c) = P_{j}(x,c) = c$.
% We have
 \begin{eqnarray*}
 \begin{aligned}
 &\Big{|}\E_{x_0}(g_{n,\; \xi_{n}}(X_{n+j}) | \mF_n) \Big{|} = \Big{|}P_{n:(n+j-1)}f(X_{n}) - \pi_{\xi_n}(f)\Big{|} \\
 &\le \Big{|}\sum_{k=1}^{j-1}(P_n^k - \pi_{\xi_n})(P_{n+k}-P_n)P_{(n+k+1):(n+j-1)} f(X_n)\Big{|} + \Big{|}P_n^jf(X_n) - \pi_{\xi_n}(f) \Big{|}\\
 & \le \Big{|}\sum_{k=1}^{j-1}M_1 (P_n^k - \pi_{\xi_n}) (P_{n+k}-P_n)  V(X_n)\Big{|} + M_2 \rho^j V(X_n) \\
 & \le \sum_{k=1}^{j-1}M_3 k \alpha_n  (P_n^k - \pi_{\xi_n}) V(X_n) + M_2 \rho^j V(X_n)\\
 & \le M_4 \alpha_n V(X_n) \sum_{k=1}^{j-1} \rho^k k   + M_2 \rho^j V(X_n) \\
 & \le [M_4 \frac{\alpha_n}{(1-\rho)^2} + M_2 \rho^j] V(X_n),
 \end{aligned}
\end{eqnarray*}
where $\alpha_n$ is defined in Lemma \ref{alpha} (1). The transition between the first and the second line comes from the well known identity 
%\begin{equation*}
$P_{1:n} = \sum_{k=1}^{n-1}{P_{1}^{k}(P_{k+1}-P_{1})P_{(k+2):n}} + P_1^n$;
%\end{equation*} 
in the transition between the second and the third line, we recursively apply drift condition (A3); and in the transition between the third and the fourth line we applied Lemma \ref{alpha} (2) with telescoping sum. Now taking into account of Lemma \ref{alpha} (3) and again with telescoping sum, we have
\begin{eqnarray}
 \begin{aligned}
& \Big{|}\E_{x_0}(g_{n+j,\; \xi_{n+j}}(X_{n+j})|\mF_n)\Big{|} <   [M_4 \frac{\alpha_n}{(1-\rho)^2} + M_2 \rho^j] V(X_n) + j M \alpha_n \\
& <  [M_5 \alpha_n\;j + M_2 \rho^j] V(X_n),  
 \end{aligned}
\end{eqnarray}
where $M_5 = 2 \max(M_4/(1-\rho)^2, M)$.
Using the filtration trick in the end of Lemma $3.1$ of \citet{atchade+rosenthal}, we get for $k=1, \dots, j$
\begin{eqnarray*}
 \begin{aligned}
& \Big{|}\E_{x_0} (g_{n+j,\; \xi_{n+j}} (X_{n+j})|\mF_n)\Big{|} 
=\Big{|}\E_{x_0}\big{[} \E_{x_0} \Big{(}g_{n+j,\; \xi_{n+j}}(X_{n+j})\big{|}\mF_{n+j-k} \Big{)} \big{|} \mF_n \big{]} \Big{|} \\
&\le \E_{x_0}\big{[}\Big{|} \E_{x_0} \Big{(}g_{n+j,\; \xi_{n+j}}(X_{n+j})\big{|}\mF_{n+j-k} \Big{)}\Big{|} \big{|} \mF_n \big{]} \\
&\le  \min_{1\le k \le j}{[M_5 \alpha_{n+j-k}\;k + M_2 \rho^k]} \E_{x_0}(V(X_n+j-k)|\mF_n) \\
&\le \min_{1\le k \le j}{[M_5 \alpha_{n+j-k}\;k + M_2 \rho^k]}  V(X_n).
 \end{aligned}
\end{eqnarray*}
Taking expectation with respect to process $\{Y_n\}$, we get
\begin{eqnarray} \label{eqn:mixingale2}
 \begin{aligned}
& |\E_{x_0,y_0} (g_{n+j,\; \xi_{n+j}}(X_{n+j})|\mF_n)|  
 \le  \E_{y_0}\{\min_{1\le k \le j}{[M_5 \alpha_{n+j-k}\;k + M_2 \rho^j]}\} V(X_n)  \\
& \le \min_{1\le k \le j} [M_6 \frac{k}{n+j-k} + M_2 \rho^j] V(X_n),  
 \end{aligned}
\end{eqnarray}
which is claim (\ref{eqn:mixingale1}).
Taking $n=0$ in (\ref{eqn:mixingale2}), we obtain (\ref{eqn:ergodic1}).

Note $B(c_1, c_2, n) \rightarrow 0$ as $n\rightarrow \infty$ (see Section \ref{sec:rate_steep}), this and (\ref{eqn:mixingale2}) show that
\begin{equation} \label{eqn:ergodic3}
\E_{x_0,y_0}{\left(f(X_n) - \pi_{\xi_n}(f)\right)} \rightarrow 0, \mbox{    as } n\rightarrow \infty.
\end{equation}
Following argument similar to the proof of Theorem $3.2$ of \citet{atchade+rosenthal},
write $Z_n = g_{n,\xi_n}(X_n) - \E_{x_0,y_0}{(g_{n,\xi_n}(X_n))}$. Given (\ref{eqn:mixingale2}), $(Z_n, \mF_n)$ is an $L^2$-mixingale with mixingale sequence $c_n$ being a constant and $\psi_n = B(c_1, c_2, n)$.   Then Corollary $2.1$ of \citet{davidson+dejong} implies that
\begin{equation}\label{eqn:ergodic4}
\frac{1}{n} \sum_{k=1}^{n}{\left( g_{k,\xi_k}(X_k) - \E_{x_0,y_0}{g_{k,\xi_k}(X_k)}\right)} \rightarrow 0, \hspace{.1in} P_{x_0,y_0} \mbox{ a.s. as } n \rightarrow \infty.
\end{equation}
Combine (\ref{eqn:ergodic3}) and (\ref{eqn:ergodic4}) we obtain (\ref{eqn:ergodic2}).
\end{proof}
%%%%%%%%%%%%%%%%%%%%%%%%%%%%%%%%%%%%%%%%%%%%%%%%

The following lemma is needed to prove the Theorem \ref{thm:converge}.
It is a technical lemma that is an extension of of Lemma $3.1$ in \citet{atchade+liu}.

%%%%%%%%%%%%%%%%%%%
\begin{lemma} \label{lem:pointset}
Let $\{f_n\}$ be a sequence of measurable functions and let $\{\mu_n\}$ be a sequence of probability measures such that $|f_n| < V$ and $f_n \rightarrow f $ pointwise and $\mu_n \rightarrow \mu$ setwise. In addition, $\int{V(x) \mu_n(dx)} < \infty$ and  $\int{V(x) \mu(dx)} < \infty$. Then
$\int{f_n(x) \mu_n(dx)} \rightarrow \int{f(x) \mu(dx)}$.
\end{lemma}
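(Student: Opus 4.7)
The lemma is a dominated convergence theorem with both the integrand and the measure varying, so the plan is to decouple the two sources of variation via the triangle inequality
\begin{equation*}
|\mu_n(f_n) - \mu(f)| \le \mu_n(|f_n - f|) + |\mu_n(f) - \mu(f)|,
\end{equation*}
and then handle each piece with a truncation argument that uses $V$-domination to control the tails and boundedness of the truncated integrand to invoke setwise convergence on the bulk.

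For the second term $|\mu_n(f) - \mu(f)|$, I would split $f = f \mathbf{1}_{\{V \le K\}} + f \mathbf{1}_{\{V > K\}}$. The truncated part is bounded in absolute value by $K$, so setwise convergence $\mu_n \to \mu$ yields $\mu_n(f \mathbf{1}_{\{V \le K\}}) \to \mu(f \mathbf{1}_{\{V \le K\}})$ by the standard fact that setwise convergence implies convergence of integrals of bounded measurable functions (Vitali--Hahn--Saks). The tails are controlled by $|\mu_n(f \mathbf{1}_{\{V > K\}})| \le \mu_n(V \mathbf{1}_{\{V > K\}})$ and similarly for $\mu$, which can be made arbitrarily small uniformly in $n$ by choosing $K$ large, provided $V$ is uniformly integrable under $\{\mu_n\}$. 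This last property is implicit in the context of the paper, since the lemma is applied with $\mu_n = \xi_n$ the empirical measures of a $V$-geometrically ergodic chain, for which $\xi_n(V) \to \pi_t(V)$ almost surely.

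For the first term $\mu_n(|f_n - f|)$, I would apply the same truncation. On $\{V > K\}$ the contribution is bounded by $2 \mu_n(V \mathbf{1}_{\{V > K\}})$, again controlled by uniform integrability. On $\{V \le K\}$ the differences $|f_n - f|$ are bounded by $2K$ and tend to $0$ pointwise; Egorov's theorem applied to $\mu$ supplies a measurable set $A$ with $\mu(A^c \cap \{V \le K\})$ arbitrarily small and on which $f_n \to f$ uniformly. Setwise convergence then transfers the control from $\mu$ to $\mu_n$, because $\mathbf{1}_{A^c \cap \{V \le K\}}$ is a bounded measurable function, so $\mu_n(A^c \cap \{V \le K\}) \to \mu(A^c \cap \{V \le K\})$.

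The main obstacle is not any single deep inequality but rather the orchestration: varying measures preclude a direct appeal to the classical DCT, so one must carefully transfer pointwise convergence (a statement about $\mu$-null sets) into integral control against the moving $\mu_n$. The two ingredients that make this go through are (i) setwise convergence upgraded to convergence of integrals of bounded functions, and (ii) uniform integrability of $V$ under $\{\mu_n\}$. Once these are in place the remaining work is a routine $\varepsilon/3$ bookkeeping combining the Egorov set, the truncation level $K$, and the index $n$.
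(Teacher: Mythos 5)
Your proof is correct under the extra hypothesis you flag, and it takes a genuinely different route from the paper's. The paper invokes the generalized dominated convergence theorem for setwise-convergent measures (Proposition 18, Chapter 11 of Royden) to reduce the lemma to the single claim $\int V\,d\mu_n \to \int V\,d\mu$, and then tries to prove that claim by sandwiching $V$ between simple functions $\phi_m \le V \le \psi_m < 2V$ and letting first $n$ and then $m$ tend to infinity. Your argument is instead self-contained: a triangle-inequality split, truncation at $\{V\le K\}$, setwise convergence applied to the bounded bulk, and Egorov's theorem to transfer the pointwise convergence of $f_n$ (a statement about $\mu$) into integral control against the moving measures $\mu_n$. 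Both strategies are legitimate in outline, but yours is the more careful one on the single point that actually matters: you say explicitly that you need uniform integrability of $V$ under $\{\mu_n\}$ (equivalently, given setwise convergence, $\mu_n(V)\to\mu(V)$), and you note that this is not among the lemma's stated hypotheses.

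That observation is not a cosmetic caveat: without it the lemma is false as stated. Take $V(k)=k$ on the positive integers, $\mu_n=(1-1/n)\delta_1+(1/n)\delta_{n^2}$, and $f_n=f=V/2$. Then $\mu_n\to\delta_1$ setwise, $\mu_n(V)=1-1/n+n<\infty$ for every $n$, $\delta_1(V)=1<\infty$, and $|f_n|<V$ with $f_n\to f$ pointwise, yet $\mu_n(f_n)\to\infty$ while $\mu(f)=1/2$. The paper's own proof fails at exactly this point: when $V$ is unbounded there is no simple (finitely-valued) function $\psi_m$ satisfying $\psi_m\ge V$, so the upper half of the sandwich does not exist, and no argument from setwise convergence alone can deliver $\mu_n(V)\to\mu(V)$. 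Read as a proof of the lemma under the additional hypothesis that $V$ is uniformly integrable with respect to $\{\mu_n\}$ --- which does hold in the paper's application, where $\mu_n=\xi_n$ and $\xi_n(V)\to\pi_t(V)$ almost surely --- your argument is complete, and it in fact repairs a defect in the published one. You should state the uniform integrability assumption in the lemma itself rather than leaving it implicit.
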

\begin{proof}[Proof of Lemma \ref{lem:pointset}]
In light of Proposition $18$ in Chapter $11$ of \citet{roydenbook}, it is sufficient to prove that $\int{V(x) \mu_n(dx)} \rightarrow \int{V(x) \mu(dx)}$. However, setwise convergence of $\mu_n \rightarrow \mu$ implies that for any simple function $\psi$, we have $\int{\psi \mu_n(dx)} \rightarrow \int{\psi \mu(dx)}$. Since simple functions are dense (in $L^p$), we have two sequences of simple functions $2 V > \psi_m \ge V$ and $\phi_m \le V$ such that $\psi_m \searrow V$ and $\phi_m \nearrow V$, and $\int{\psi_m \mu_n(dx)} \ge \int{V \mu_n(dx)}
\ge \int{\phi_m \mu_n(dx)}$. Let $n \rightarrow \infty$, we have $\int{\psi_m \mu(dx)}
\ge \lim_n{\int{V \mu_n(dx)}} \ge \int{\phi_m \mu(dx)}$ for any $m$.  Now let $m \rightarrow \infty$ and apply Lebesgue's dominant convergence theorem to finish the proof.
\end{proof}
%%%%%%%%%%%%%%%%%%%

\begin{theorem} \label{thm:converge}
With the setting outlined as in Theorem \ref{thm:mixingale}, assume:
\begin{enumerate}
\item For any $x \in \xx$ and $A \in \mathcal{B}$, $P_{\xi_n}(x, A) \rightarrow P_{\xi}(x, A)$ as $n\rightarrow \infty$, $P_{y_0}$-a.s.
\item
%$P_{\xi}$ has invariant distribution $\pi_\xi$ and $P_{\xi_n}$ has invariant distribution $\pi_{\xi_n}$.
There exists a finite constant $M$ and a $\rho \in (0,1)$ such that $||P_{\xi}^k(x, \cdot) - \pi(\cdot)||_V < M V(x) \rho^k$ and $||P_{\xi_n}^k(x, \cdot) - \pi_{\xi_n}(\cdot)||_V < M V(x) \rho^k$ $P_{y_0}$-a.s.
\end{enumerate}
Then for any measurable function $f: \mX \rightarrow \mathcal{R}$ such that  $|f| < V$ we have
\begin{equation}
\E_{x_0, y_0}[f(X_n)] \rightarrow \pi(f) \hspace{.2in} and \hspace{.2in} \frac{1}{n} \sum_{k=1}^n{f(X_k)} \rightarrow \pi(f) \hspace{.1in} P_{x_0,y_0}-a.s.
\end{equation}
\end{theorem}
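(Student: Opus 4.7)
My plan is to combine the two conclusions of Theorem~\ref{thm:mixingale} with a separate argument that $\pi_{\xi_n}(f) \to \pi(f)$ $P_{y_0}$-a.s. Equation~(\ref{eqn:ergodic1}) already gives $\E_{x_0,y_0}[f(X_n) - \pi_{\xi_n}(f)] \to 0$ and (\ref{eqn:ergodic2}) already gives $\tfrac{1}{n}\sum_k [f(X_k) - \pi_{\xi_k}(f)] \to 0$ $P_{x_0,y_0}$-a.s., so once I show $\pi_{\xi_n}(f) \to \pi(f)$ a.s., Cesaro summation applied pathwise combines with (\ref{eqn:ergodic2}) to deliver the a.s. ergodic conclusion, and $L^1$ convergence (from the a.s. statement plus uniform integrability) combines with (\ref{eqn:ergodic1}) to deliver the expectation statement.

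The heart of the proof is therefore the a.s. convergence $\pi_{\xi_n}(f) \to \pi(f)$. For any fixed $x \in \xx$ and integer $k \ge 1$, I would insert the $k$-step iterates and use the triangle inequality:
\begin{equation*}
|\pi_{\xi_n}(f) - \pi(f)| \le |\pi_{\xi_n}(f) - P_{\xi_n}^k(x,f)| + |P_{\xi_n}^k(x,f) - P_\xi^k(x,f)| + |P_\xi^k(x,f) - \pi(f)|.
\end{equation*}
The first and third terms are each bounded by $MV(x)\rho^k$ by assumption~(2), so they can be made arbitrarily small by choosing $k$ large. For each fixed $k$, I would then show the middle term tends to $0$ as $n \to \infty$ by induction on $k$. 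The base case $k=1$ follows from Lemma~\ref{lem:pointset} applied to the constant sequence $f_n \equiv f$ and to the measures $\mu_n = P_{\xi_n}(x,\cdot) \to P_\xi(x,\cdot)$ (setwise, by assumption~(1)). For the inductive step, write $P_{\xi_n}^k f(x) = \int P_{\xi_n}^{k-1} f(y)\, P_{\xi_n}(x, dy)$ and apply Lemma~\ref{lem:pointset} with $f_n = P_{\xi_n}^{k-1} f$, which converges pointwise to $P_\xi^{k-1} f$ by the inductive hypothesis, and the same $\mu_n$.

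I expect the main obstacle to be verifying the domination hypothesis of Lemma~\ref{lem:pointset} inside the induction, namely that $|P_{\xi_n}^{k-1} f|$ is dominated by a single fixed function with finite integral against all the $P_{\xi_n}(x,\cdot)$. Assumption~(2) gives $|P_{\xi_n}^{k-1} f(y)| \le \pi_{\xi_n}(|f|) + MV(y) \le \pi_{\xi_n}(V) + MV(y)$, so the obstacle reduces to a uniform bound on $\pi_{\xi_n}(V)$ on a $P_{y_0}$-full-measure set. This bound should come from the drift condition underlying the assumptions of Theorem~\ref{thm:mixingale}: integrating $P_{\xi_n} V \le \lambda V + b \mathbf{1}_C$ against the stationary measure $\pi_{\xi_n}$ yields $\pi_{\xi_n}(V) \le b/(1-\lambda)$. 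With $|P_{\xi_n}^{k-1} f| \le CV$ for a uniform $C$, Lemma~\ref{lem:pointset} applies (after treating $CV$ as the dominator), the induction closes, and $\pi_{\xi_n}(f) \to \pi(f)$ $P_{y_0}$-a.s. follows. Running the same sandwich argument with $f = V$ gives $\pi_{\xi_n}(V) \to \pi(V)$ a.s., and paired with the uniform bound $\pi_{\xi_n}(V) \le b/(1-\lambda)$ this delivers the uniform integrability of $\{\pi_{\xi_n}(f)\}$ needed to promote the a.s. convergence to $L^1$ convergence. Combined with (\ref{eqn:ergodic1}) and (\ref{eqn:ergodic2}) as outlined in the first paragraph, this completes both assertions.
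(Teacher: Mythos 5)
Your proposal follows essentially the same route as the paper's proof: invoke Theorem \ref{thm:mixingale} for the two centered statements, then establish $\pi_{\xi_n}(f) \to \pi(f)$ $P_{y_0}$-a.s. via the triangle inequality through the $k$-step iterates and an induction on $k$ using Lemma \ref{lem:pointset}. The only difference is that you explicitly verify the domination hypothesis of Lemma \ref{lem:pointset} inside the induction (via the drift bound $\pi_{\xi_n}(V) \le b/(1-\lambda)$), a detail the paper leaves implicit; this is a welcome tightening rather than a departure.
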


\begin{proof}[Proof of Theorem \ref{thm:converge}]
For any $|f|<V$, by Theorem \ref{thm:mixingale} we have $\E_{x_0,y_0}[f(X_n) - \pi_{\xi_n}(f)] \rightarrow 0$ and $\frac{1}{n} \sum_{k=1}^n{[f(X_k) - \pi_{\xi_k}(f)]} \rightarrow 0$ $P_{x_0,y_0}$-a.s. as $n \rightarrow \infty$. To finish we need to prove that $\pi_{\xi_n}(f) \rightarrow \pi(f)$ $P_{y_0}$-a.s. as $n \rightarrow \infty$.  By assumption we have $P_{\xi_n}f(x) \rightarrow P_\xi f(x)$ $P_{y_0}$-a.s. for all $x \in \xx$. By Lemma \ref{lem:pointset}, $P_{\xi_n}^2f(x) = P_{\xi_n}(P_{\xi_n}f(x)) \rightarrow P_\xi^2f(x)$. By recursion, for any $k\ge1$,
\begin{equation} \label{eqn:p_xi}
P_{\xi_n}^kf(x) \rightarrow P_\xi^kf(x) \hspace{.1in} \mbox{ $P_{y_0}$-a.s.  as } n \rightarrow \infty.
\end{equation}
We have
\begin{eqnarray} \label{eqn:pi_xi}
\begin{aligned}
& |\pi_{\xi_n}(f) - \pi(f)| 
 \le  |\pi_{\xi_n}(f) - P_{\xi_n}^kf(x)| + |P_\xi^kf(x) - \pi(f)| 
 +  |P_{\xi_n}^kf(x) - P_\xi^kf(x)| \\ 
 & \le  2 M V(x) \rho^k + |P_{\xi_n}^kf(x) - P_\xi^kf(x)|.
\end{aligned}
\end{eqnarray}
Combine above with (\ref{eqn:p_xi}) we have $|\pi_{\xi_n}(f) - \pi(f)| \rightarrow 0$ $P_{y_0}$-a.s.
\end{proof}

Recall in Algorithm \ref{alg:many}, the $H$-th chain is a homogeneous chain. Conditioning on the realization of $\{X^{(i+1)}_n\}$ (for $i < H$),  $\{X_n^\ii\}$ is a nonhomogeneous Markov chain with transition kernel $P_n^\ii$.  
The $P_n^\ii$ operates on $f$ such that:
%\begin{equation}\label{eqn:kernel:k}
$P_n^\ii f(x) = (1-s) T^\ii f(x) + s K_{\xi_n^{(i+1)}}^\ii f(x)$,
%\end{equation}
where $T^\ii$ is a homogeneous local chain with stationary distribution $\pi_i$, and
\begin{equation} \label{eqn:kernel:kn}
%%\begin{align}
K_{\xi_n^{(i+1)}}^\ii f(x) = \frac{1}{n}\sum_{j=1}^{n}{\alpha(x, y_j) f(y_j)} + \frac{1}{n} f(x) \sum_{j=1}^{n}{(1-\alpha(x,y_j))},
%%\end{align}
\end{equation}
where $y_i$'s are samples that induce empirical measure $\xi_n^{(i+1)}$, and
% \begin{equation} \label{eqn:alpha}
$ \alpha(x,y) = \min\left(1, \frac{\pi_i(y)}{\pi_i(x)} \frac{\pi_{i+1}(x)}{\pi_{i+1}(y)} \right)$.
 %\end{equation}
Let $P^\ii(x, A)=(1-s) T^\ii f(x) + s K_{\xi^{(i+1)}}^\ii f(x)$ be the limiting transition kernel as $n\rightarrow\infty$ where
$$K_{\xi^{(i+1)}}^\ii f(x) = \int_{\xx}{\alpha(x, y) f(y)\pi_{i+1}(dy)} + f(x) \int_{\xx}{\left(1-\alpha(x,y)\right) \pi_{i+1}(dy)}.$$ We have the following ergodic theorem.
\begin{theorem} \label{thm:ergodic}
In proceeding framework, assume $T^\ii, i \in \{0, \dots, H\},$ satisfies assumptions (A1-A3), then for $|f| < V$, as $n \rightarrow \infty$,
\begin{equation}
\E[f(X_n^\ii)] \rightarrow \pi_i(f) \hspace{.2in} and \hspace{.2in} \frac{1}{n} \sum_{k=1}^n{f(X_k^\ii)} \rightarrow \pi_i(f) \hspace{.2in} a.s.
\end{equation}
\end{theorem}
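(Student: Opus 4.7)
The plan is to proceed by downward induction on the chain index $i$, from $i = H$ down to $i = 0$, applying Theorem \ref{thm:converge} at each inductive step with the role of the exploring chain $\{Y_n\}$ played by $\{X_n^{(i+1)}\}$ and the role of the sampling chain $\{X_n\}$ played by $\{X_n^\ii\}$.

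For the base case $i = H$, the top chain is a homogeneous Small-World chain whose local component $T^{(H)}$ satisfies (A1)--(A3). Mixing $T^{(H)}$ with a fixed heavy-tailed long-range proposal preserves $\psi$-irreducibility and aperiodicity, and because the long-range move rejects with positive probability, the drift and minorization conditions inherited from $T^{(H)}$ carry over to the full kernel (with adjusted constants). Standard Meyn--Tweedie theory \citep{meynbook} then gives $V$-geometric ergodicity of $\{X_n^{(H)}\}$ together with the a.s.\ law of large numbers, so the conclusion of the theorem holds at level $H$.

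For the inductive step, assume the conclusion for chain $i+1$ and verify the two hypotheses of Theorem \ref{thm:converge} for chain $i$, conditionally on the realization of $\{X_n^{(i+1)}\}$. For hypothesis (1), inspection of the formulas for $P_n^\ii$ and the limiting kernel $P^\ii$ shows that only the long-range components differ, and these are linear in the proposal measure; pointwise convergence $P_n^\ii(x,A) \to P^\ii(x,A)$ therefore reduces to the setwise convergence $\xi_n^{(i+1)}(A) \to \pi_{i+1}(A)$ a.s.\ for every $A \in \mathcal{B}$, which follows from the inductive ergodic average applied to $f = \mathbf{1}_A$ (valid since $V \ge 1$). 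For hypothesis (2), one needs uniform $V$-geometric ergodicity $\|(P_n^\ii)^k(x, \cdot) - \pi_n^\ii(\cdot)\|_V \le M V(x) \rho^k$ with $M, \rho$ independent of $n$. Since $P_n^\ii$ is a convex combination of $T^\ii$ and a rejection kernel of the form (\ref{eqn:kernel:kn}), the small set $C$ and drift function $V$ from (A2)--(A3) for $T^\ii$ are preserved by $P_n^\ii$; drift-and-minorization quantitative bounds then yield $M$ and $\rho$ uniform in $n$.

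Once hypotheses (1) and (2) are in place, Theorem \ref{thm:converge} delivers $\E[f(X_n^\ii)] \to \pi_i(f)$ and $n^{-1} \sum_{k=1}^n f(X_k^\ii) \to \pi_i(f)$ $P$-a.s., closing the induction. The main technical obstacle is the uniformity in hypothesis (2): because $\xi_n^{(i+1)}$ is random, the drift contribution of the long-range move to $P_n^\ii V(x)$ depends on $\xi_n^{(i+1)}(V)$, which must be controlled almost surely. This is handled by invoking the inductive hypothesis with $f = V$, giving $\xi_n^{(i+1)}(V) \to \pi_{i+1}(V) < \infty$ a.s.; a pathwise bound on $\xi_n^{(i+1)}(V)$ then produces drift constants independent of $n$ for all sufficiently large $n$, which is enough because only the tail behavior of $P_n^\ii$ enters Theorem \ref{thm:converge}.
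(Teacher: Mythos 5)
Your proposal is correct and follows essentially the same route as the paper's proof: downward induction on the chain index, with the conclusion at level $i+1$ used to verify the two hypotheses of Theorem \ref{thm:converge} for the kernel $P_n^\ii$ at level $i$. If anything you are more explicit than the paper about the one genuinely delicate point --- that the geometric-ergodicity constants $M,\rho$ in hypothesis (2) must be uniform in $n$ even though the drift contribution of the long-range move depends on the random quantity $\xi_n^{(i+1)}(V)$ --- which the paper disposes of with the single sentence that the drift and minorization conditions on $T^\ii$ ``transfer'' to $P_n^\ii$.
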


\begin{proof}[{Proof of Theorem \ref{thm:ergodic}}]
Let $x_0^{(i)}$ be the starting point of process $\{X_n^{(i)}\}.$
It is easy to check the detailed blance equation holds for the process $\{X_n^{(H)}\}$, so the claim holds for chain $H$.

For each $i < H$, we want to show that the $i$-th chain in STEEP algorithm satisfies two assumptions in Theorem \ref{thm:converge}. 

Since process $\{X_n^{(i+1)}\}$ has stationary distribution $\pi_{i+1}$, we have for all $x \in \xx$,  $P_{n}^{(i+1)}(x, \cdot) \rightarrow \pi_{i+1}.$
Since $\alpha(x,y)$ as in (\ref{eqn:kernel:kn}) is bounded, by Lebesgue's dominate convergence theorem, $K_{\xi_n^{(i+1)}}^\ii f(x) \rightarrow K_{\xi^{(i+1)}}^\ii f(x)$, $P_{x_0^{(i+1)}}$-a.s.,  which implies $P_n^\ii f(x) \rightarrow (1-s) T^\ii f(x) + s K_{\xi^{(i+1)}}^\ii f(x)$ $P_{x_0^{(i+1)}}$-a.s.,
and hence $P_n^\ii(x, A) \rightarrow P^\ii(x,A)$ $P_{x_0^{(i+1)}}$-a.s. for each $x \in \xx$ and $A \in \mathcal{B}$.  So the Assumption (1) hold true.

The drift and minorization conditions on $T^\ii$ transfers to $P_n^\ii$, so that each $P_n^\ii$ admits an invariant distribution $\pi_{i,n}$ $P_{x_0^{(i+1)}}$-a.s. and is geometric ergodic. The limiting transition kernel $P^\ii$ also inherit the drift and minorization conditions on $T^\ii$ and admits invariant distribution $\pi_i$ $P_{x_0^{(i+1)}}$-a.s. So the Assumption (2) hold true.

By induction, Theorem \ref{thm:ergodic} now follows Theorem \ref{thm:converge}.
\end{proof}

%%%%%%%%%%%%%%%%%%%%%%%%%%%%%%%%%%%%%%%
\subsection{Convergence Rate} \label{sec:rate_steep}
The convergence of Algorithm \ref{alg:two} was broken down into two parts. The first part is $(P_{\xi_1} \cdots P_{\xi_n})(x, \cdot) \rightarrow \pi_{\xi_n}$ (Theorem \ref{thm:mixingale}), which is the convergence of the sampling chain;  and the second part is $\pi_{\xi_n} \rightarrow \pi_t$ (Theorem \ref{thm:converge}), which is the convergence of the exploring chain.   We shall discuss them separately.

First note the exploring chain in Theorem \ref{thm:mixingale} is more general in that it can be either a homogenous or a non-homogenous chain. 
From (\ref{eqn:pi_xi}) and Lemma \ref{alpha}
\begin{eqnarray} \label{eqn:conv_rate}
\begin{aligned}
|\pi_{\xi_n}(f) - \pi(f)|  \le & 2 M V(x) \rho^k + |P_{\xi_n}^kf(x) - P_\xi^kf(x)| \\
 \le & 2M V(x) \rho^k + 2sV(x)\|\xi_n-\xi\|_V \\
 \le & (c_1 \rho^k + c_2 \|\xi_n - \xi\|_V) V(x), 
\end{aligned}
\end{eqnarray}
where for any $|f| < V$, we have $\xi(f) = \pi_t(f)$ $P_{y_0}$-a.s. 
Clearly, the convergence rate of 
$\pi_{\xi_n} \rightarrow \pi$ is dominated by the rate $c_2\|\xi_n-\xi\|_V$, which depends on the convergence rate of the exploring chain. 
  
In light of the discussion in \citet{atchade+rosenthal}, Theorem \ref{thm:mixingale} implies that the sampling chain in Algorithm \ref{alg:two} converges (to $\pi_{\xi_n}$) at rate
\begin{equation}\label{eqn:B}
B(n, \rho) = \min_{1\le k \le n}{(c_1 k/(n-k) + c_2 \rho^k)}.
\end{equation}
Take derivative with respect to $k$ and set to $0$ to get
\begin{equation} \label{eqn:nk}
c_1 \frac{n}{(n-k)^2} = c_2 \log{\frac{1}{\rho}} \;\rho^k.
\end{equation}
If we assume $k =O(n)$, then (\ref{eqn:nk}) reduce to $\frac{1}{n} \approx c \rho^k$, solve to get $k = O(\log{(n)})$, and we reach contradiction.  So we may assume $k=o(n)$, and (\ref{eqn:nk}) simplifies to
$c_1 \frac{1}{n} \approx c_2 \log{\frac{1}{\rho}} \;\rho^k.$
Take log on both sides and solve to get
 $k \approx -\log{n}/\log{\rho}.$
 Substitute back to (\ref{eqn:B}) and use $\log{(1-x)} \approx -x$ when $x$ small and note $\rho^k \approx 1/n$  we get
\begin{equation} \label{eqn:brate}
B(n,\rho) \approx O(\frac{n^{-1} \log{n}}{1-\rho}).
\end{equation}
Loosely, $(1-\rho) \approx Gap(P_\xi)$, where $P_\xi$ is the limiting transition kernel of the sampling chain.  From (\ref{eqn:brate})  the sampling chain of Algorithm \ref{alg:two} converges at a polynomial rate that is also proportional to $1/\Gap(P_\xi)$. Note the rate is not a function of the \emph{size} of the $\pi(x)$.

Extending to Algorithm \ref{alg:many}, the above analysis suggests that the sampling chains converge at rate $O(\tau^{\lambda d} \; n^{-1}\log{n})$ for some $\lambda \in [1,2]$ (Theorem \ref{thm:sample}), where $\tau>1$, is the ratio between adjacent temperatures. A large $\tau$ slows down the convergence of the sampling chains. However, the spectral gap of the exploring chain is enlarged by a factor of $\tau^{Hd}$ (Theorem \ref{thm:explore}). Hard problems will benefit from such a trade-off because their exploring chains usually have very small spectral gaps.  

%\textcolor{red}{Be cautious to interpret this (per first reviewer).} 
%A caveat regarding above discussion regarding the convergence rates 
%We note that both recent work of nonlinear MCMC \citep{andrieu+etal} and the combined work of \citep{atchade+rosenthal,atchade+liu} provide avenues to prove the ergodicity of the Algorithm \ref{alg:many}. We choose to follow the argument of \citet{atchade+rosenthal} and \citet{atchade+liu} because the convergence rate is more apparent in their setting.  
%

%%%%%%%%%%%%%%%%%%%%%%%%%%%%%%%%%%%%%%%%%%%%%%%%%
%%%%%%%%%%%%%%%%%%%%%%%%%%%%%%%%%%%%%%%%%%%%%%%%%

\section{Spectral Gaps} \label{sec:gap}
Our main aim in this section is to prove Theorems \ref{thm:explore} and \ref{thm:sample}.
%We then discuss the convergence rate of Algorithm \ref{alg:two}.
We rely on the state decomposition theorem (Theorem \ref{thm:SDT}) to analyze the Algorithm \ref{alg:two}: we partition a multi-modal distribution into pieces log-concave pieces and analyze each restricted local chain $P_{A_i}$ and the component chain $P_c$ separately.
The $\Gap(P_{A_i})$ is well studied \citep{LV03,peter,rudolf,guankrone} using the isoperimetric inequality \citep{LS,KLS}, and $\Gap(P_c)$ can be computed directly using conductance and the Cheeger's inequality.

\subsection{Conductance and spectral gap}
 
%Before we move to prove the Theorems regarding spectral gap, we need some preliminary results.
%%%%%%%%%%%%%%%%%%%%%%%%%%%%%%%%%%%%%%%%%%%%%%%
%\subsection{Conductance and Cheeger's Inequality}
Recall $P(x, dy)$ defined in (\ref{eqn:P}),
for $A \in \mathcal{B}$ with $\pi(A)>0$, define
\begin{equation} \label{eqn:flow}
\h_P(A)=\frac{1}{\pi(A)}\int_{A}{P(x, A^c)\pi(dx)}.
\end{equation}
The quantity $\h_P(A)$ can be thought of as the (probability) flow
out of the set $A$ in one step when the Markov chain is at
stationarity.
%Notice that $\pi(dx)/\pi(A)$ is the conditional stationary measure on the set $A$.
The \textit{conductance} of the chain is defined by
\begin{equation}
\h_P=\inf_{ 0<\pi(A)\le 1/2}{\h_P(A)}. \label{eqn:conductance}
\end{equation}
%Note that $0\le \h_P \le 1$.
Intuitively, a small $\h_P$  implies mixing slowly because  the chain may be trapped in a set whose
measure is $\le 1/2$. On the other hand, a
large $\h_P$ implies mixing rapidly as  nowhere is sticky.
We have the following theorem \citep{lawlersokal}.  
\begin{theorem}[Cheeger's Inequality] \label{Cheeger}
Let P be a reversible Markov transition kernel with invariant
measure $\pi$. Then
\begin{equation}\label{eqn:cheeger}
\frac{\h_P^2}{2} \le \Gap (P) \le 2\:\h_P.
\end{equation}
\end{theorem}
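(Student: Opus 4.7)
The plan is to prove both inequalities via the Dirichlet form / Rayleigh quotient characterization of the spectral gap. Since $P$ is reversible with respect to $\pi$, $P_0$ is self-adjoint on $L_0^2(\pi)$, so by the min-max principle
\[
\Gap(P) \;=\; \inf_{f \in L_0^2(\pi),\, f \neq 0} \frac{\mathcal{E}(f,f)}{\|f\|^2}, \qquad \mathcal{E}(f,f) \;=\; \tfrac12 \int\!\!\int (f(x)-f(y))^2 \pi(dx)P(x,dy).
\]
This equivalent form is the workhorse: an upper bound on $\Gap(P)$ follows from choosing any test function, while a lower bound requires a universal inequality relating $\mathcal{E}(f,f)$ to $\|f\|^2$.

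For the easy direction $\Gap(P) \le 2 h_P$, I would fix $A$ with $0 < \pi(A) \le 1/2$ and plug in the centered indicator $f = 1_A - \pi(A)$. A direct computation shows $\|f\|^2 = \pi(A)(1-\pi(A))$, and using reversibility one gets $\mathcal{E}(f,f) = \int_A P(x,A^c)\pi(dx) = \pi(A)\, h_P(A)$. The Rayleigh quotient becomes $h_P(A)/(1-\pi(A)) \le 2 h_P(A)$; taking infimum over $A$ gives the claim.

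For the hard direction $\Gap(P) \ge h_P^2/2$, I would carry out the standard level-set argument. First reduce to a nonnegative test function with small support: given any $f \in L_0^2(\pi)$, let $m$ be a median of $f$ and split $f - m$ into positive and negative parts; at least one of them, say $g \ge 0$, has $\pi(\{g>0\}) \le 1/2$, and one checks $\mathcal{E}(g,g)/\|g\|^2 \le \mathcal{E}(f,f)/\|f\|^2$. Next, apply the pointwise identity $g^2(x) - g^2(y) = (g(x)-g(y))(g(x)+g(y))$ with Cauchy--Schwarz to get
\[
\Bigl(\int\!\!\int |g^2(x)-g^2(y)|\,\pi(dx)P(x,dy)\Bigr)^2 \;\le\; 2\,\mathcal{E}(g,g) \cdot \int\!\!\int (g(x)+g(y))^2 \pi(dx)P(x,dy).
\]
The right-hand integral is bounded by $4\|g\|^2$ using $(a+b)^2 \le 2(a^2+b^2)$, reversibility, and $P(x,\xx)=1$. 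For the left-hand side, apply the co-area formula in the form
\[
\int\!\!\int |g^2(x)-g^2(y)|\,\pi(dx)P(x,dy) \;=\; 2\int_0^\infty \!\!\int_{A_t} P(x, A_t^c)\,\pi(dx)\,dt,
\]
where $A_t = \{g^2 > t\}$. Since $A_t \subseteq \{g>0\}$ has $\pi(A_t) \le 1/2$, the definition of conductance gives $\int_{A_t} P(x,A_t^c)\pi(dx) \ge h_P \pi(A_t)$, and integrating $\int_0^\infty \pi(A_t)\,dt = \|g\|^2$ yields a lower bound of $2 h_P \|g\|^2$. Combining the two bounds and dividing by $\|g\|^4$ gives $\mathcal{E}(g,g)/\|g\|^2 \ge h_P^2/2$, as desired.

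The main obstacle is the hard direction, specifically getting the Cauchy--Schwarz step and the co-area identity to interact correctly: one must be careful that the level-set $A_t$ inherits the support condition $\pi(A_t) \le 1/2$ (this is why the median reduction is essential) and that the co-area identity is applied to $g^2$ rather than $g$ so that $\int_0^\infty \pi(A_t)\,dt$ reproduces $\|g\|^2$ exactly. The easy direction and the variational reformulation are routine once the framework is set up.
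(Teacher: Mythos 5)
The paper does not prove this theorem at all --- it is quoted from \citet{lawlersokal} --- so there is no ``paper proof'' to match against; what you have written is the standard conductance/Dirichlet-form proof of that cited result, and all of its steps check out. The easy direction with the centered indicator $1_A-\pi(A)$ is computed correctly ($\|f\|^2=\pi(A)(1-\pi(A))$, $\mathcal{E}(f,f)=\pi(A)\,\h_P(A)$ by reversibility). In the hard direction, the median reduction, the Cauchy--Schwarz step applied to $g^2(x)-g^2(y)=(g(x)-g(y))(g(x)+g(y))$, the bound $\iint(g(x)+g(y))^2\,\pi(dx)P(x,dy)\le 4\|g\|^2$, and the co-area identity with level sets $A_t=\{g^2>t\}\subseteq\{g>0\}$ (so that $\pi(A_t)\le 1/2$ and the conductance bound applies) all fit together exactly as you describe, and $\int_0^\infty\pi(A_t)\,dt=\|g\|^2$ closes the loop to give $\mathcal{E}(g,g)/\|g\|^2\ge \h_P^2/2$.

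One caveat deserves attention, though it is arguably a defect of the theorem as restated in this paper rather than of your argument. Your opening identity
\begin{equation*}
\Gap(P)\;=\;\inf_{f\in L_0^2(\pi),\,f\neq 0}\frac{\mathcal{E}(f,f)}{\|f\|^2}
\end{equation*}
equals $1-\sup\mathrm{spec}(P_0)$, i.e.\ the gap at the \emph{top} of the spectrum, whereas the paper defines $\Gap(P)=1-|||P_0|||$ with $|||P_0|||$ the operator norm, which also sees the negative part of the spectrum. The two coincide only when $\sup\mathrm{spec}(P_0)\ge -\inf\mathrm{spec}(P_0)$ (e.g.\ for lazy chains or positive operators). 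The lower bound $\h_P^2/2\le 1-|||P_0|||$ is genuinely false in general: the deterministic two-state flip chain has $\h_P=1$ but $|||P_0|||=1$, so $\Gap(P)=0<1/2$. So either state explicitly that you are bounding $1-\sup\mathrm{spec}(P_0)$ (which is what \citet{lawlersokal} actually prove, and what the rest of the paper implicitly uses), or add a laziness hypothesis before invoking the min--max principle in the form you wrote.
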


Let
$k(x,dy)=(1-s)k_1(x,dy)+s\: k_2(x,dy),$ with $0\leq s \leq 1$.
Suppose operators $P,$ $P_1,$ and $P_2$ are induced by $k, k_1, k_2$ respectively. Clearly,
\begin{equation} \label{eqn:Pplus}
P=(1-s) P_1+s\: P_2.
\end{equation}
%and, for any measurable set $A,$ $\h_P(A) =(1-s)\:\h_{P_1}(A)+s\:\h_{P_2}(A).$
It is straightforward to show the following Lemma \citep{guankrone}, which allows one to bound the spectral gap for a mixture of kernels.
\begin{lemma} \label{lem:gap}
Suppose $P$ is defined by
(\ref{eqn:Pplus}).  Then
\begin{equation} \label{eqn:concave}
\Gap(P)\geq \frac{1}{2} \max{\left((1-s)^2\:\h_{P_1}^2, s^2 \h_{P_2}^2\right)};
\end{equation}
\end{lemma}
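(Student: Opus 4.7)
The plan is to combine Cheeger's inequality with the linearity of the conductance functional $\h_{(\cdot)}(A)$ in the transition kernel. First I would apply the lower half of Cheeger's inequality (Theorem \ref{Cheeger}), which tells me $\Gap(P) \ge \h_P^2/2$, so it suffices to produce a lower bound on $\h_P$ of the form $\max((1-s)\h_{P_1},\, s\h_{P_2})$ and then square.

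Next I would exploit that for any measurable set $A$ with $0 < \pi(A) \le 1/2$, the quantity $\h_{(\cdot)}(A)$ defined in \eqref{eqn:flow} is linear in the kernel: integrating the identity $P(x,A^c) = (1-s) P_1(x,A^c) + s\, P_2(x,A^c)$ against $\pi(dx)/\pi(A)$ immediately gives
\begin{equation*}
\h_P(A) \;=\; (1-s)\,\h_{P_1}(A) \;+\; s\,\h_{P_2}(A).
\end{equation*}
Because both $P_1$ and $P_2$ are Metropolis--Hastings kernels with respect to $\pi$ (each inherits reversibility from the single-proposal construction in \eqref{eqn:P}), the conductances $\h_{P_1}$ and $\h_{P_2}$ are defined relative to the same stationary distribution $\pi$, so the decomposition above is well-posed for the same $A$ on both sides.

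I would then take the infimum over $A$ with $0 < \pi(A) \le 1/2$. Since $\h_{P_i}(A) \ge \h_{P_i}$ for each such $A$, one obtains
\begin{equation*}
\h_P \;\ge\; (1-s)\,\h_{P_1} + s\,\h_{P_2} \;\ge\; \max\bigl((1-s)\,\h_{P_1},\, s\,\h_{P_2}\bigr),
\end{equation*}
the last inequality being just the trivial fact that a sum of nonnegative numbers dominates each summand. Plugging this into $\Gap(P) \ge \h_P^2/2$ yields the claimed bound
\begin{equation*}
\Gap(P) \;\ge\; \tfrac{1}{2}\max\bigl((1-s)^2\h_{P_1}^2,\, s^2\h_{P_2}^2\bigr).
\end{equation*}

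There is no real obstacle here beyond confirming that the convex-combination structure $P = (1-s)P_1 + s P_2$ really does hold at the level of the full MH kernels (as it does when each proposal draws its own acceptance ratio, matching the construction used in Algorithm \ref{alg:sw}), and that both $P_1, P_2$ share the invariant measure $\pi$ so that Cheeger's inequality is being applied to comparable conductances. If one wanted a tighter bound, one could instead retain $(1-s)\h_{P_1} + s\h_{P_2}$ before squaring, but the statement as written follows directly from dropping one term in the sum.
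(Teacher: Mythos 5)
Your argument is correct and is exactly the intended one: the paper itself gives no proof (it defers to \citet{guankrone} as ``straightforward''), and the standard derivation there is precisely your chain of steps --- linearity of $\h_{(\cdot)}(A)$ in the kernel, $\inf_A$ of a sum dominating the sum of infima, dropping one nonnegative summand, and then the lower half of Cheeger's inequality applied to the reversible mixture $P$. Your closing remarks (that $P=(1-s)P_1+sP_2$ is the stated hypothesis \eqref{eqn:Pplus} and that all conductances are taken with respect to the common invariant measure $\pi$) correctly identify the only points needing verification, so nothing is missing.
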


The following theorem was proved in \citet{guankrone}.
\begin{theorem} \label{lowerbound}  
Suppose $\pi(x)$ is an $\alpha$-smooth log-concave probability density of $d$-dimension on a convex set
$K$. Suppose further that $\pi$ has barycenter $0$ and set
$M_{\pi}=\int_{K}{|x| \: \pi(dx)}$. Then the conductance, $\h_P$,
of the Metropolis-Hastings chain with transition kernel $P(x,dy)$
induced by the uniform $\delta$-ball proposal satisfies
\begin{equation} \label{eqn:hp}
\h_P \ge \frac{\delta\:e^{-\alpha\:\delta}}{1024\:\sqrt{d}\: M_\pi}.
\end{equation}
\end{theorem}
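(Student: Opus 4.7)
The plan is to bound the flow $\int_A P(x, A^c)\pi(dx)$ from below for an arbitrary measurable $A$ with $0<\pi(A)\le 1/2$, following the Lovász--Simonovits route that combines a one-step overlap estimate for the ball-walk Metropolis kernel with the Kannan--Lovász--Simonovits isoperimetric inequality for log-concave densities.

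The first step is an overlap lemma: there is a constant $\eta$ of order $\delta/\sqrt{d}$ such that whenever $|x-y|<\eta$, the kernels $P(x,\cdot)$ and $P(y,\cdot)$ satisfy $\|P(x,\cdot)-P(y,\cdot)\|_{\mathrm{TV}}\le 1-c\, e^{-\alpha\delta}$. Two ingredients produce this estimate. First, the volume of $B_\delta(x)\cap B_\delta(y)$ is a constant fraction of $\mathrm{vol}(B_\delta)$ precisely when $|x-y|\lesssim \delta/\sqrt{d}$, which is the source of the $\sqrt{d}$ factor; this is a direct geometric computation using that the cap cut off by a hyperplane at distance $\delta/\sqrt{d}$ from the centre carries a constant fraction of the ball's mass in any dimension. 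Second, the $\alpha$-smoothness of $\pi$ implies $a(x,y)\ge e^{-\alpha\delta}$ for any proposal inside $B_\delta(x)$, so the accepted transition density dominates $e^{-\alpha\delta}$ times the uniform proposal on the overlap of the two balls.

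With the overlap lemma in hand, partition $A$ into $A_1=\{x\in A:P(x,A^c)<\theta\}$ and $A_2=A\setminus A_1$, and analogously split $A^c$ into $B_1=\{y\in A^c:P(y,A)<\theta\}$ and $B_2$, for a threshold $\theta$ to be chosen. If either $\pi(A_2)\ge\pi(A)/2$ or $\pi(B_2)\ge\pi(A)/2$ (the second case handled by reversibility), the boundary points alone yield flow at least $\theta\pi(A)/2$, and we are done. Otherwise the overlap lemma forces $d(A_1,B_1)\ge\eta$: if $x\in A_1$, $y\in B_1$ lay closer than $\eta$, the common mass of the two kernels, multiplied by the $e^{-\alpha\delta}$ acceptance probability, would push $\max(P(x,A^c),P(y,A))$ above $\theta$ whenever $\theta<\tfrac{1}{2}c\, e^{-\alpha\delta}$, a contradiction.

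In this remaining case the KLS isoperimetric inequality for log-concave $\pi$ with first moment $M_\pi$ gives, for any partition $\xx=S_1\sqcup S_2\sqcup S_3$ with $d(S_1,S_2)\ge\eta$, the bound $\pi(S_3)\gtrsim \eta\min(\pi(S_1),\pi(S_2))/M_\pi$. Applying this with $S_1=A_1$, $S_2=B_1$, $S_3=A_2\cup B_2$ and rewriting the flow via reversibility as $\int_{A_2}P(x,A^c)\pi(dx)+\int_{B_2}P(y,A)\pi(dy)$ yields a lower bound of order $\theta\,\eta\,\pi(A)/M_\pi$. Setting $\theta\asymp e^{-\alpha\delta}$ balances the two cases and produces $\h_P\gtrsim \eta\, e^{-\alpha\delta}/M_\pi \asymp \delta\, e^{-\alpha\delta}/(\sqrt{d}\, M_\pi)$, with the universal constant $1024$ absorbing the geometric constants from the overlap computation, the reversibility split, and the KLS inequality. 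The main obstacle is keeping the $\sqrt{d}$ dependence in the overlap lemma sharp: any weakening there (for example to $1/d$) would propagate directly into the final conductance bound and force us to replace $\sqrt{d}$ by $d$ in the theorem.
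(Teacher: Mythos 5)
The paper does not actually prove Theorem \ref{lowerbound} itself --- it is quoted from \citet{guankrone}, whose argument is precisely the Lov\'asz--Simonovits/KLS route you describe: a ball-overlap lemma supplying the $\delta/\sqrt{d}$ scale, the $e^{-\alpha\delta}$ acceptance bound from $\alpha$-smoothness, the dichotomy between ``many boundary points'' and ``$A_1,B_1$ are $\eta$-separated,'' and the isoperimetric inequality for log-concave measures in terms of the first moment $M_\pi$. Your outline is correct and matches that proof in all essentials, with only the routine constant-chasing (absorbed into the factor $1024$) left implicit.
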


 Combining Equations \eqref{eqn:concave} and \eqref{eqn:hp} we obtain a lower bound of the spectral gap of a local chain sampling a log-concave distribution.  Set $\delta = 1/\alpha$ to see that it is fast-mixing. This and the state decomposition theorem leads to a conclusion that a Small-World chain is fast mixing \citep{guankrone}.  Moreover, because the dimensionality $d$ is in a polynomial form in (\ref{eqn:hp}), so there is no ``curse of dimensionality" with a local chain sampling a log-concave distribution. On the contrary, there is a ``curse of dimensionality" in the component chain simply because ``volumes" of modes matter \citep{guankrone}.  This justifies our focus on the component chain in Theorems \ref{thm:explore} and \ref{thm:sample}.

%%%%%%%%%%%%%%%%%%%%%%%%%%%%%%%%%%%%%%%%%%%%%%%%%
\subsection{Proof of The Thorems \ref{thm:explore} and \ref{thm:sample}}
We need to establish the connection between a log-concave distribution and its powered-up alternatives.
\begin{lemma} \label{lem:1}
Let $f(x)$ be a log-concave distribution on $\xx$ of dimension $d$,  and $f(x)$ attain its unique maximum  at $0$. Let $f_t(x) \propto f(x)^{1/t}$, where $t > 1$. Then (a) $\frac{f_t(0)}{f(0)} \ge t^{-d}$, (b) $\frac{f_t(x)}{f(x)} \ge \frac{f_t(0)}{f(0)}$ for  any $x \in \xx$, and (c) if $f(x)$ is $\alpha$-smooth, then the equality in (a) and (b) holds up to a constant.  
\end{lemma}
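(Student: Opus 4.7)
The plan is to rewrite everything in terms of the convex potential $W(y):=\log[f(0)/f(y)]$. By log-concavity $W$ is convex, by maximality of $0$ we have $W\ge 0$, and by construction $W(0)=0$. Writing $f(y)=f(0)e^{-W(y)}$ and substituting into $f_t(y)=f(y)^{1/t}/Z_t$ (where $Z_t=\int f^{1/t}$) gives the clean identities
\begin{equation*}
\frac{f_t(0)}{f(0)} \;=\; \frac{1}{f(0)\int_{\xx} e^{-W(y)/t}\,dy}, \qquad f(0)\int_{\xx}e^{-W(y)}\,dy \;=\; 1,
\end{equation*}
so the whole lemma reduces to controlling $\int e^{-W(y)/t}dy$.

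For part (a), the key step is the self-scaling inequality $W(y/t)\le W(y)/t$ valid for every $t\ge 1$; this is immediate from convexity and $W(0)=0$ by writing $y/t=(1/t)y+(1-1/t)\cdot 0$. Exponentiating and then changing variables $z=y/t$ yields
\begin{equation*}
\int e^{-W(y)/t}\,dy \;\le\; \int e^{-W(y/t)}\,dy \;=\; t^d\int e^{-W(z)}\,dz \;=\; t^d/f(0),
\end{equation*}
which plugged back into the identity above gives $f_t(0)/f(0)\ge t^{-d}$. Part (b) is then a one-line monotonicity: since $f(x)\le f(0)$ everywhere and $1-1/t\ge 0$, we have $f(x)^{1-1/t}\le f(0)^{1-1/t}$, so $f_t(x)/f(x)=1/[Z_t f(x)^{1-1/t}]\ge 1/[Z_t f(0)^{1-1/t}]=f_t(0)/f(0)$.

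For part (c), I would use $\alpha$-smoothness to supply matching upper bounds. The bound $W(y)\le \alpha|y|$ (which is $\alpha$-smoothness applied at $0$) gives $\int e^{-W(y)/t}dy\ge \int e^{-\alpha|y|/t}dy = c_d(t/\alpha)^d$ for the appropriate dimensional constant $c_d$, and inverting yields $f_t(0)/f(0)\le(\alpha^d/(c_d f(0)))\,t^{-d}$, matching (a) up to a multiplicative factor that depends on $\alpha,d,f$ but not on $t$. For the analogue of (b), $\alpha$-smoothness also gives $f(x)\ge f(0)e^{-\alpha|x|}$, whence $[f(0)/f(x)]^{1-1/t}\le e^{\alpha|x|(1-1/t)}$ and therefore $f_t(x)/f(x)\le e^{\alpha|x|(1-1/t)}f_t(0)/f(0)$, uniformly bounded on any bounded region such as a single mode's basin $A_i$. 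The main subtlety — and the reason the statement reads ``up to a constant'' rather than ``equal'' — is that the constants in (c) must absorb $\alpha$, $d$, and the spatial scale of $f$: without $\alpha$-smoothness a nearly flat $f$ would have $f_t(0)/f(0)\approx 1$, far from $t^{-d}$, so the reverse inequality would fail catastrophically.
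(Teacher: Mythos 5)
Your proof is correct and is essentially the paper's argument in a different notation: the self-scaling inequality $W(y/t)\le W(y)/t$ is exactly the paper's $f(x/t)\ge f(x)^{1/t}f(0)^{1-1/t}$ after taking logarithms, and the $t^d$ Jacobian from the substitution $z=y/t$ is the same step. Your part (b) is in fact a slight simplification (it uses only $f(x)\le f(0)$ rather than the paper's monotonicity along rays), and your part (c) makes the constants more explicit than the paper's brief comparison with exponential bounds.
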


\begin{proof}
By definition of log-concave, for $0 \leq s \leq 1$,
$$f(s\; x + (1-s) y) \ge f(x)^{s} f(y)^{1-s}.$$ Let $s = 1/t$ and $y=0$. Choose $h$, such that $1/t+1/h=1$, then
\begin{equation} \label{eqn:f}
f(x/t) \ge f(x)^{1/t} f(0)^{1/h}.
\end{equation}
Since $\int_{\xx}{f(x)dx}=1$, we can get %write integral in the polar coordinate to get,
$\int_{\xx}{f(x/t)dx}  = t^d \int_{\xx}{f(x/t) d(x/t)} = t^d.$ So 
%\begin{equation}
%%\begin{align}
%\int_{\xx}{f(x/t)dx} &= \int_{\mu \in S} \int_0^{r(\mu)}{t^{d-1} f(s \mu/t) ds} d\mu \\
%&= \int_{\mu \in S} \int_0^{r(\mu)}{t^{d-1} f(t\lambda \mu/t) d (t \lambda)} d\mu \\
%&\le \int_{\mu \in S} \int_0^{r(\mu)}{t^{d} f(\lambda \mu) d \lambda} d\mu \\
% &= t^d.
% %\end{align}
% \end{equation}
%  So,
$$\int_{\xx}{f(x)^{1/t} f(0)^{1/h} dx} = f(0)^{1/h} \int_{\xx}{f(x)^{1/t} dx} \leq t^d,$$
 rearrange term to get:
 $\frac{f(0)^{1/t}} {\int_{\xx }{f(t)^{1/t} dx}} \ge f(0) \; t^{-d}.$
Identify the left hand side is $f_t(0)$ to prove (a).

Since $f(x)$ is log-concave, for any unit vector $u \in \xx$, $f(uz)$ is monotone decreasing in $z$, hence $f(z u)^{1/t -1}$ is monotone increasing in $z$ (as we assume $t > 1$). This proves (b).

Because of logconcavity, for any unit vector $u \in \xx$, there exists scalars $g, \nu$, such that $f(u z) < e^{-\nu z}$ for $z > g$. By $\alpha$-smooth, we have $f(u z) > e^{-\alpha z}$. For exponential functions, equality in (\ref{eqn:f}) holds and this proves (c).  
\end{proof}
\begin{remark}
Note the bound in $(a)$ is essentially tight for a general log-concave distribution, as can be seen clearly from one-dimensional exponential distributions. For distributions such as (multi-variate) normal, the bound in $(a)$ is not tight.  However, we are confined by the technical condition of $\alpha$-smoothness, and we do not pursue a better bound. Although the $\alpha$-smooth is a technical condition for the convenience to bound conductance \citep{guankrone}, it is worthwhile to note that it is crucial for the tempering scheme as well. Taking an extreme example, the tempering will not help for two uniform distributions on two unit discs that are far apart. While tempering is helpful for two normal distributions that are far apart, as we will see in Section \ref{sec:app}.     
\end{remark}
\begin{lemma} \label{lem:2}
For a piece-wise $\alpha$-smooth log-concave distributions defined in (\ref{eqn:pi}),  let $\xi_i(x) = w_i^{1/t}\pi_i(x)^{1/t} / I_i$, where $I_i=w_i^{1/t}\int_{A_i}{\pi_i(x)^{1/t} dx}$ for $i= 1, \dots, m$.  Let $I=\sum{I_i}/m$ and $\xi'_i(x) = w_i^{1/t}\pi_i(x)^{1/t} / I$ for each $i$. Then there exists constants $c_1, c_2$ such that $c_1 < \xi'_i(A_i) / \xi_i(A_i) < c_2$.  
\end{lemma}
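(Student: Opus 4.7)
The plan is to reduce the claim to a uniform bound on the ratio of the normalizing constants $\{I_j\}$, and then invoke Lemma \ref{lem:1}(c) to make each $I_j$ explicit up to a multiplicative constant that depends only on the fixed target $\pi$.

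First I would observe that the bound we seek is a direct calculation: by definition of $I_i$ as the normalizer of $\xi_i$,
\begin{equation*}
\xi_i(A_i)=\frac{1}{I_i}\int_{A_i}w_i^{1/t}\pi_{(i)}(x)^{1/t}\,dx=1,\qquad \xi'_i(A_i)=\frac{1}{I}\int_{A_i}w_i^{1/t}\pi_{(i)}(x)^{1/t}\,dx=\frac{I_i}{I}.
\end{equation*}
Since $I=\frac{1}{m}\sum_{j=1}^{m}I_j$, the ratio we must bound equals $m\,I_i/\sum_j I_j$, so it suffices to produce constants $c,C>0$ (independent of $i,j$) with $c\le I_i/I_j\le C$ for all pairs $i,j$.

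Next I would apply Lemma \ref{lem:1}(c) to each piece. Shifting coordinates so the mode of $\pi_{(j)}$ sits at the origin (which is legitimate because $A_j$ is convex and the rescaling $x\mapsto x/t$ used in Lemma \ref{lem:1} stays inside $A_j$ for $t\ge 1$), Lemma \ref{lem:1}(a) combined with the matching upper bound in part (c) gives
\begin{equation*}
\int_{A_j}\pi_{(j)}(x)^{1/t}\,dx \;=\; \Theta\!\left(\pi_{(j)}(\text{mode})^{1/t-1}\, t^{d}\right),
\end{equation*}
where the implicit constants depend only on the smoothness parameter $\alpha$ and dimension $d$. Writing $M_j:=\max_{x\in A_j}\pi_{(j)}(x)$, this yields $I_j=\Theta\!\left(w_j^{1/t}M_j^{1/t-1}t^d\right)$, and consequently
\begin{equation*}
\frac{I_i}{I_j}=\Theta\!\left((w_i/w_j)^{1/t}\,(M_i/M_j)^{1/t-1}\right).
\end{equation*}

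Finally I would argue that the right-hand side is bounded above and away from $0$ uniformly in $t\in[1,\infty)$. Both factors are continuous in $1/t\in(0,1]$, tend to $w_i/w_j$ as $t\to 1$, and tend to $M_j/M_i$ as $t\to\infty$; since $w_i,w_j,M_i,M_j$ are fixed strictly positive constants attached to $\pi$, a compactness/monotonicity argument on the interval $1/t\in[0,1]$ gives the required $c_1,c_2$.

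The only real subtlety is the boundary issue when invoking Lemma \ref{lem:1}(c) on a density supported on the convex piece $A_j$ rather than on all of $\xx$; this is why I would make the mode-centering step explicit and lean on convexity of $A_j$ to keep the dilation argument in the lemma's proof valid. Beyond that, the proof is essentially bookkeeping, and the constants $c_1,c_2$ come out depending on the fixed ``size" parameters $\{w_i,M_i\}$ of $\pi$ but not on $t$.
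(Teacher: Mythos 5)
Your proof is correct and follows essentially the same route as the paper's: reduce the claim to bounding the ratios $I_i/I_j$, invoke Lemma \ref{lem:1}(c) to evaluate each normalizer up to constants, and observe that the residual $t$-dependence (from the weights and from the factor $t^d$) cancels or is uniformly bounded for $t\ge 1$. If anything, your version is slightly more careful than the paper's, which writes $I_i = c_i t^{-d}$ and silently absorbs the mode-height factor $M_i^{1/t-1}$ into the constant, whereas you bound it explicitly over $1/t\in(0,1]$.
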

\begin{proof}
For any pair $w_i \ge w_j$, for $t \ge 1$, we have $w_j/w_i \le w_i^{1/t}/w_j^{1/t} \le w_i/w_j$. So that without loss of generality, we may assume $w_i=1$ for  $i= 1, \dots, m$.  By (c) of Lemma \ref{lem:1}, we have $\xi_i(A_i) = c_i \pi_i(A_i) t^{-d} / I_i$. Since $\xi_i(A_i) = \pi_i(A_i) = 1$, we have $I_i=c_i t^{-d}$, which implies that $\xi'_i(A_i) / \xi_i(A_i)  =  m\: c_i / \sum{c_i}$ and the conclusion follows.
\end{proof}
\begin{remark}
Lemma \ref{lem:2} says two different normalizations, namely, normalization within each pieces and normalization combining all pieces, are equivalent up to a constant for a piece-wise $\alpha$-smooth log-concave distribution. So we can use the normalization within each pieces to ease the presentation.
\end{remark}

With Lemma \ref{lem:1} at hand, the proof of Theorem \ref{thm:sample} is easy.
 
\begin{proof}[Proof of Theorem \ref{thm:sample}]
Consider $\pi_{A_1}$,  $\pi_{A_2}$ as the $\pi$ restricted on $A_1$ and $A_2$ respectively, which we denote by $\pi_1$ and $\pi_2$. Let $\xi_1, \xi_2$ be their normalized powered-up alternatives respectively.  
%Let $c_k$ be the number of samples in $\hp_t^k$, with
%\begin{equation} \label{eqn:cka2}
%$\frac{1}{c_k} \sum_{y \in \hp_t^k}{{\bf 1}(y\in A_2)} \rightarrow \pi_t(A_2) = \xi_2(A_2)$.  
%\end{equation}
The conductance bound of $P_c$ requires estimate of the integral appeared in (\ref{eqn:ph})
\begin{eqnarray} \label{eqn:prove:thm:sample}
%\begin{align}
%&\lim_{k\rightarrow \infty} \frac{1}{c_k}\int_{A_1}\sum_{y\in \hp_t^k \cap A_2} {\min\left(1, \frac{\pi_2(y)}{\pi_1(x)} \frac{\xi_1(x)}{\xi_2(y)}\right)} \pi_1(x) dx \\
 &\int_{A_1\times A_2}{\min\left(1, \frac{\pi_2(y)}{\pi_1(x)} \frac{\xi_1(x)}{\xi_2(y)}\right)} \pi_1(x) \xi_2(y) dx dy\\
&=\int_{A_1\times A_2}{\min\left(\frac{\xi_1(x)}{\pi_1(x)},\frac{\xi_2(y)}{\pi_2(y)}\right)} \pi_1(x) \pi_2(y) dx dy \\
%& \geq  \int_{A_i\times A_j} \pi(x) \pi(y) dx dy \\
&=  c_1 \frac{1}{t^d} \pi_1(A_1) \pi_2(A_2),
%\end{align}
\end{eqnarray}
where the last equality obtained from Lemma \ref{lem:1}.  
Therefore, from Equation (\ref{eqn:conductance}) we get for some constant $c$  
\begin{equation}
%\begin{align}
h_{12} = \frac{1}{2 \pi_1(A_1)} \int_{A_1\times A_2}{\min\left(1, \frac{\pi_2(y)}{\pi_1(x)} \frac{\xi_1(x)}{\xi_2(y)}\right)} \pi_1(x) \xi_2(y) dx
 = \frac{c}{t^d}.  
%\end{align}
\end{equation}
Hence, the conductance of the component chain $P_c$ is proportional to $t^{-d}$. Following the Cheeger's Inequality (\ref{eqn:cheeger}) to get the bound on spectral gap.   
It is clear that the bound is \emph{not} a function of the ``size" of $\pi$.
\end{proof}

%%%%%%%%%%%%%%%%%%%%%%%%%%%%%%%%%%%%%%%%%%%%%
\begin{proof}[Proof of Theorem \ref{thm:explore}]
Use same notations defined in the proof of Theorem \ref{thm:sample}. Let $h(x,y)$ be the long range proposal.
We have
\begin{eqnarray}
\begin{aligned}
h_{12} &= \frac{1}{2 \pi_1(A_1)}\int_{A_1\times A_2}{\min{\left(1, \frac{\pi_2(y)}{\pi_1(x)}\frac{h(y,x)}{h(x,y)} \right)} \pi_1(x) h(x,y) dx dy} \\
&=\frac{1}{2 \pi_1(A_1)}\int_{A_1\times A_2}{\min{\left(\frac{h(y,x)}{\pi_1(x)}, \frac{h(x,y)}{\pi_2(y)} \right)} \pi_1(x) \pi_2(y)dx dy} \\
&> a_1\frac{1}{2 \pi_1(A_1)} \int_{A_1\times A_2}{\pi_1(x) \pi_2(y)dx dy} \\
& = \frac{1}{2}a_1 \pi_2(A_2)
\end{aligned}
\end{eqnarray}
where
\begin{equation}
a_1 = \inf_{x\in A_1, y\in A_2}{\min{\left(\frac{h(y, x)}{\pi_1(x)}, \frac{h(x, y)}{\pi_2(y)} \right)}}.
\end{equation}

Follow same argument to get for powered up distributions $\xi_i$'s.
\begin{eqnarray}
\begin{aligned}
h_{12}^\prime &= \frac{1}{2 \xi_1(A_1)}\int_{A_1\times A_2}{\min{\left(1, \frac{\xi_2(y)}{\pi_1(x)}\frac{h(x)}{h(y)} \right)} \xi_1(x) h(y) dx dy} \\
%&=\frac{1}{2 \xi_1(A_1)}\int_{A_1\times A_2}{\min{\left(\frac{h(x)}{\xi_1(x)}, \frac{h(y)}{\xi_2(y)} \right)} \xi_1(x) \xi_2(y)dx dy} \\
%&> a_t \frac{1}{2 \xi_1(A_1)} \int_{A_1\times A_2}{\xi_1(x) \xi_2(y)dx dy} \\
& > \frac{1}{2}a_t  \xi_2(A_2)
\end{aligned}
\end{eqnarray}
where
\begin{equation}
a_t = \inf_{x\in A_1, y\in A_2}{\min{\left(\frac{h(y, x)}{\xi_1(x)}, \frac{h(x, y)}{\xi_2(y)} \right)}}.
\end{equation}

Note $\pi_2(A_2) = \xi_2(A_2)$, so the ratio $h_{12}^\prime / h_{12}$ is determined by $a_t / a_1$.  However, for each $x \in A_1, y \in A_2$, $\frac{h(y,x)}{\xi_1(x)} = c\;  t^{d} \frac{h(y,x)}{\pi_1(x)}$ due to Lemma \ref{lem:1} (c). So we have $a_t / a_1 = c\; t^d$ and hence
\begin{equation} \label{eqn:hprime}
h_{12}^\prime = c\; h_{12}\; t^d.
\end{equation}
For an $m \times m$ stochastic matrix $P_c = (h_{ij})$, the spectral gap can be bounded from below \citep{pena} by
\begin{equation} \label{eqn:mb}
\Gap(P_c) \ge m\; \min_{i \neq j} h_{ij}.
\end{equation}
Combine Equations (\ref{eqn:hprime}) and (\ref{eqn:mb}) to finish the proof.
\end{proof}

%%%%%%%%%%%%%%%%%%%%%%%%%%%%%%%%%%%%%%%%%%%%%%%%%%%
\section{Applications}\label{sec:app}

There are three key parameters needed to be specified in applications, namely, the proportional of the long-range proposals $s$, the number of chains $(H+1)$ and the temperature ratio $\tau$.  
The theoretically best value of $s$ is $1/3$ because it maximizes the lower bound of the spectral gap of a Small-World chain \citep{guankrone}.  Indeed, in the following two examples, we use $s = 0.33$.  Of course for a specific application one may tune $s$ based on acceptance ratio. We note, however,  $s$ should keep constant during a MCMC run.  
To specify $H$ and $\tau$ we suggest the following procedure:  First, sample $\pi_t(x)$ and tune $t$ until acceptance ratio of long-range proposal is high, say, larger than $0.2$. Next, use Algorithm \ref{alg:two} to sample $\pi_t(x), \pi_{t/\tau}(x)$, tuning $\tau$ so that the acceptance ratio for long-range proposal of the sampling chain is $>0.2$.
$H$ can be estimated by $[\log{t}/\log{\tau}]$.  One may find burn-in and thinning helpful.

Lastly, the choice of long-range proposal is often problem-dependent. If the state space is $\xx$, we recommend a heavy tailed distribution like Cauchy.  When the state space is trees or graphs, a long-range proposal is hard to define -- we suggest to compound local proposals of randomly many times to obtain a long-range proposal.   

\subsection{Sampling Needles}
In this example, our target distribution is a mixture of normals, $f(x) = 0.5 N(x; \mu_1, \Sigma_1) + 0.5 N(x; \mu_2, \Sigma_2)$ where $\mu_1 = (0,0)^t$, $\mu_2=(5,5)^t$ and $\Sigma_1 = \Sigma_2 =  \left( \begin{smallmatrix} 0.01&0\\ 0&0.01 \end{smallmatrix} \right)$.  To see the minorization and drift condition hold for this example,  see \cite{atchade10} and references therein. 

Our local proposal is two dimensional ball with radii $0.1$. The long-range proposal is two dimensional Cauchy with scale parameter $1$. Local chains are trapped in either mode within $1,000,000$ iterations (data not shown). We use $6$ chains with $\tau=6$ with $1000$ burn-in steps in each chain. The sampling chain ran $10,000$ steps, which makes the total iterations of the all $6$ chains $81,000$ steps.  Define region $A = \{x: x_1^2+x_2^2 < 0.05^2\}$ and $p = Pr(X_n^{(6)} \in A)$, the probability that the sampling chain visits $A$. Figure \ref{fig:needles} is the sample trace of a typical run, where after thinning of every $10$ steps, the last $1,000$ samples of each chain were plotted.

\begin{figure}[htp] \label{fig:needles}
\begin{center}
\includegraphics[width=8cm]{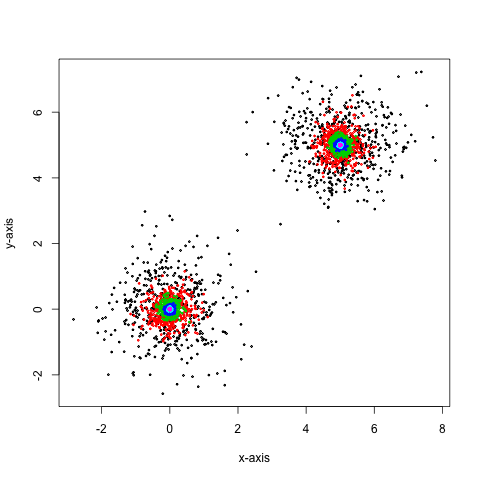}
\end{center}
\caption{Each color corresponds to different temperature $6^k$, $k=5, 4,\dots, 1, 0$.  The proportion of the sampling chain that visit region $A$ is $44.6\%$ }
\end{figure}

%
%\begin{table} [htp]
%%\caption {Statistics of $\hat{p}$ of $100$ replicates} 
%\begin{tabular}{c|c|c|c|c}
%%\hline\hline
%mean  & median & std dev  & 5 percentile & 95 percentile \\ \hline
%0.4956 & 0.4934 & 0.0796 & 0.3688 & 0.6176 %\\ \aline
%\end{tabular}
%\label{tab:2needle}
%\end{table}

We repeat above run $100$ times and obtain statistics regarding $\hat{p}$.  The mean is $0.50$ and median is $0.49$, the standard deviation is $0.08$, and the $5$ and $95$ percentiles are $0.37$ and $0.62$ respectively. 
We repeat simulations for $\mu_2=(25,25)^t$ without modifying $l(x,\cdot)$ and $h(x,\cdot)$. With an extra chain and twice number of iterations, similar results are obtained (data not shown).

\subsection{Sampling Phylogenetic Trees}
Markov chain Monte Carlo algorithms plays an important role in (Bayesian) phylogenetic inference, perhaps through the wide-spread usage of software packages like MrBayes \citep{mrbayes} and PAML \citep{paml}. \citet{vigoda} argued that phylogenetic MCMC algorithms are misleading when data (nucleotides sequences) are generated by mixture of phylogenetic trees. Fixing branch lengths, they generated sequence data using two trees that are far apart (that is, local proposals can not reach from one to another in one step).  They first showed that there is a valley in between the two trees used to simulated sequence data, and the valley becomes steeper when the sequence length ($N$) increases. Then they argued that existing local samplers takes exponentially long iterations (in $N$) to move from one mode to another. Their theoretical results is essentially the first part of the Theorem $3.1$ in \citet{guankrone}. In light of \citet{guankrone} and theories presented in this paper, we see that the slow mixing problem presented in \citet{vigoda} can be simply resolved by a Small-World sampler, and better mixing can be achieved by a STEEP sampler.

In our simulation, we fix the branch lengths the same as those in \citep{vigoda} with inner branch lengths equal $0.1$ and tip branch lengths equal $0.01$.  We use Jukes-Canter as the evolutionary model to compute likelihood of different tree topologies. Our local proposal is the nearest neighbor interchange (NNI) \cite[c.f.][]{Fbook}, and the long-range proposals are simulated by  compounding multiple (but random many) nearest neighbor interchanges.  In this example, the minorization and drift condition hold because the state space is finite. 

The five taxa example presented in \citep{vigoda} is too simple for a simulation study. We simulate DNA sequence data \citep{seq-gen} based on two generating trees (in Newick format) $A=(((((1,2),3),4),5),6),(7,8))$ and $B=(((((1,7),3),4),5),6),(2, 8))$. Note we switch the position of taxa $2$ and $7$ and it takes $5$ NNI to move from tree A to tree B. We found it difficult to simulate sequence data from the two trees that results in similar likelihood on both, so we simulate sequences of length $1000$ from tree $A$ and switch the sequences $2$ and $7$ to obtain new sequences and concatenate two sets of sequences together. By doing so, we ensure that tree $A$ and $B$ have the same likelihood.  

We first ran a simple Metropolis-Hastings algorithm and plot the distance between taxa $1$ and $2$, the local chain were trapped within one mode during $1,000,000$ steps (data not shown). We then ran STEEP sampler of $4$ chains with $\tau=10$, each chain ran $50,000$ steps with $5,000$ steps of burn-in. After thinning every $10$ steps, the last $5000$ were plotted for each chain. The left panel shows the likelihood trace plot of each chain. The right panel shows the distance (between taxa $1$ and  $2$) trace plot. Clearly, the chain moves frequently between trees $A$ and $B$.
\begin{figure}[htp]
\begin{center}
\includegraphics[width=6cm]{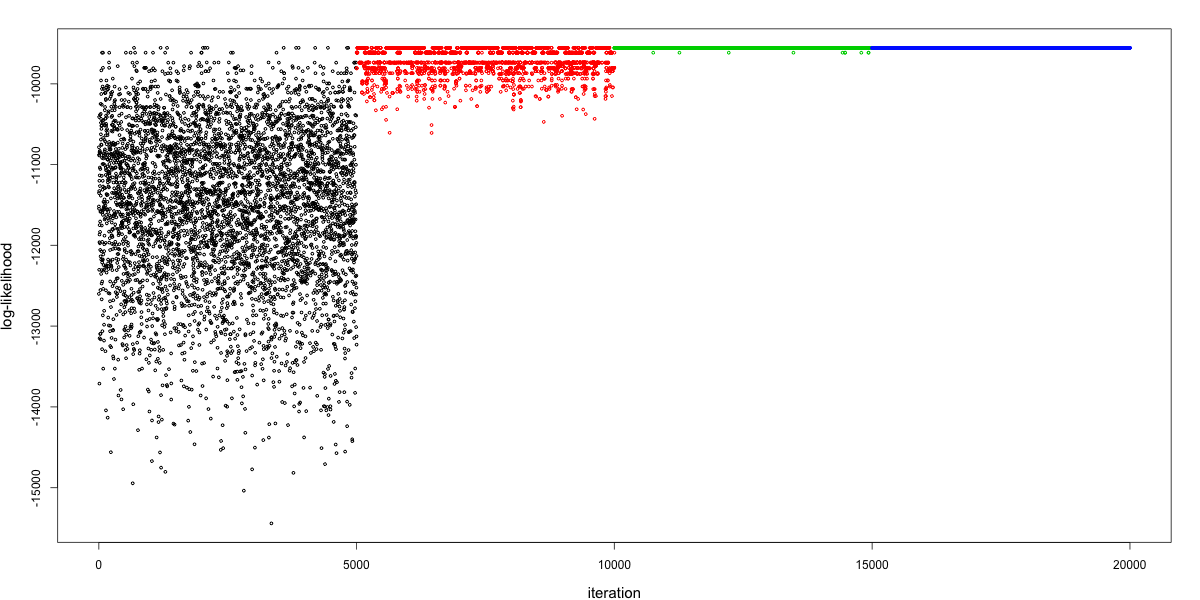}
\includegraphics[width=6cm]{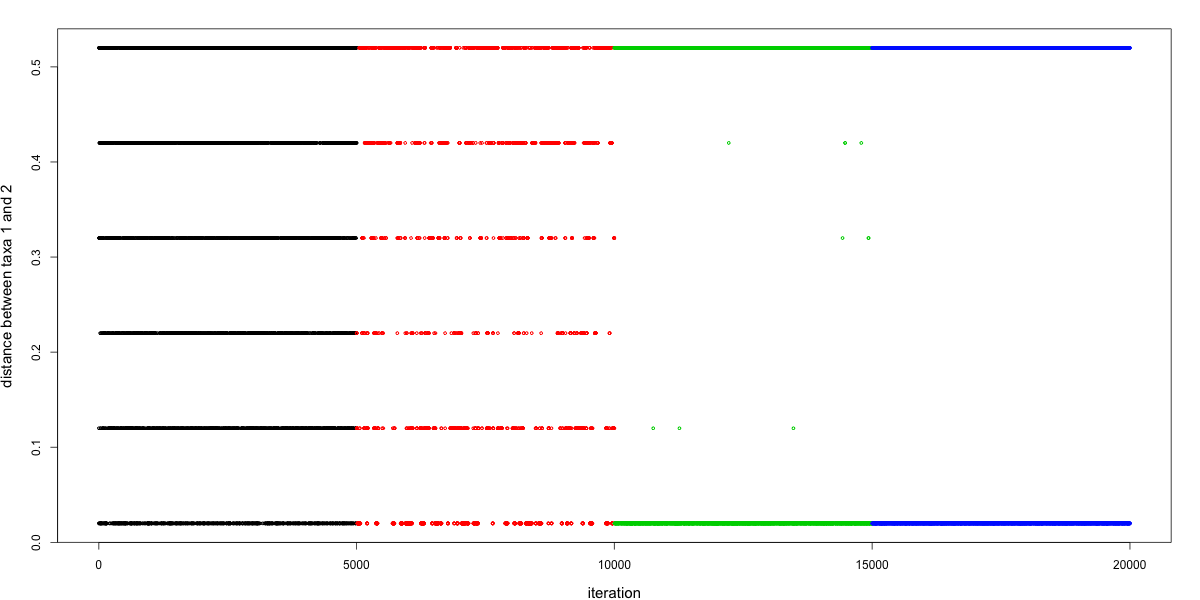}
\end{center}
\caption{Each color band corresponds to a different chain of temperature $10^k$, $k=3, 2, 1, 0$. The best log-likelihood of trees that is one local proposal (NNI) away from the generating trees is $60.44$ smaller than those of generating trees. Only two generating trees appear in the last chain and the proportion of the two generating trees are $44.0\%$ and $56.0\%$ respectively.}
\end{figure}

This toy example demonstrates that the STEEP performs well in mixture of trees where a local chain fails. Note in the example we fixed the branch lengths and evolutionary parameters.  We invite authors of MrBayes, PAML and others to further investigate its performance when taking into account of branch lengths and evolutionary models.    

 %%%%%%%%%%%%%%%%%%%%%%%%%%%%%%%%%%%%%%%%%%%%%%%%%
\section{Discussion}

We have presented a new sampling algorithm, proved its ergodicity, and demonstrated its usefulness. The analysis of the spectral gap appears to be new. Although the theory is presented in the Euclidian state space, we believe it applies to more general state space as well.  
 The STEEP algorithm bears similarities with the Equi-Energy sampler. One key difference is that STEEP emphasizes the long-range proposal, through which the STEEP provides a new perspective on the advantage of using empirical distribution of tempering.  We note that, at least in principle, the ergodic theorem and the analysis of the spectral gap apply to the Equi-Energy sampler. 
The STEEP also bears similarities with pure tempering methods such as MCMCMC. We point out three key differences.
First, STEEP takes advantage of the empirical distributions, while the ``swap" in tempering methods always use the current state of each chain.  
Second, the interaction of the different chains in STEEP is one-way -- from high temperature chains to the lower ones.  Since the higher temperature chains are not affected by the lower temperature chain, the exploring is more efficient.
Third, tempering relies on local proposals to be more efficient on a \emph{flattened} distribution, while STEEP relies on long-range proposals to be more efficient on a \emph{fattened} distribution.   

The ``Powering-up" is convenient to obtain flattened (or fattened) alternative distributions. However, for certain type of models, e.g., Ising model and its generalization, Potts model, powering-up could run into problems because there might exists a phase transition at a critical temperature \citep{BR04}. When that happens, distributions of above and below the critical temperature may have little similarity; thus it becomes a moot point to use empirical distribution of one temperature to generate long-range proposals for another. To circumvent the ``phase transition" one should discard powering-up scheme. Instead, one may use the ``multi-set sampler" \citep{multiset} to \emph{smooth} a distribution to achieve a similar effect as tempering. A multi-set sampler augments the state space from $\xx$ to $\mathbb{R}^{d \times m} $ so that the current state is a vector $(x_1, \cdots x_m)$. Define $\pi'(x_1, \cdots, x_m) = \frac{1}{m} (\pi(x_1) + \cdots + \pi(x_m))$. This averaging effectively gives a smoother marginal distribution $\pi'_1(\cdot)$ compared to $\pi(\cdot)$. And we can control the smoothness by varying $m$.

In this paper, we focus on the mixing between different modes because within each mode local proposals guarantee rapidly mixing. However, when there exists a mode that is highly correlated among certain dimensions, one needs to fine tune the local proposals. Incorporating certain adaptive sampling scheme such as \citet{yang} for local proposal into STEEP might be desirable.

%%%%%%%%%%%%%%%%%%%%%%%%%%%%%%%%%%%%%%%%%%%%%%%%%
%%%%%%%%%%%%%%%%%%%%%%%%%%%%%%%%%%%%%%%%%%%%%%%%%

\bibliographystyle{natbib}
\bibliography{swmcmc,mccr}

\end{document}